\DeclareMathOperator{\Tr}{Tr}
\newtheorem{proposition}{Proposition}
\newtheorem{theorem}{Theorem}
\newtheorem{corollary}{Corollary}
\newtheorem{remark}{Remark}
\begin{document}
	
	\title{CSI-free vs CSI-based multi-antenna WET for massive low-power Internet of Things}
	\author{
		\IEEEauthorblockN{Onel L. A. L\'opez, \emph{Member, IEEE}, 
			Nurul Huda Mahmood, \emph{Member, IEEE}, 
			Hirley Alves, \emph{Member, IEEE}, and 
			Matti Latva-aho, \emph{Senior Member, IEEE}
		}
		\thanks{Authors are affiliated to the Centre for Wireless Communications (CWC), University of Oulu, Finland. \{onel.alcarazlopez,nurulhuda.mahmood, hirley.alves,matti.latva-aho\}@oulu.fi}	
		\thanks{This work is supported by Academy of Finland (Aka) (Grants n.307492, n.318927 (6Genesis Flagship), n.319008 (EE-IoT)).}}
	
	\maketitle

	\begin{abstract}
		Wireless Energy Transfer (WET) is a promising solution for powering massive Internet of Things 
		deployments. An important question is whether the costly Channel State Information (CSI) acquisition procedure is necessary for optimum performance. In this paper, we shed some light into this matter by evaluating   CSI-based and CSI-free multi-antenna WET schemes in a setup with WET in the downlink, and periodic or Poisson-traffic Wireless Information Transfer (WIT) in the uplink. When CSI is available, we show that a maximum ratio transmission beamformer is close to optimum whenever the farthest node experiences at least 3 dB of power attenuation more than the remaining devices. On the other hand, although the adopted CSI-free mechanism  is not capable of providing average harvesting gains, it does provide greater WET/WIT  diversity with lower energy requirements when compared with the CSI-based scheme. Our numerical results evidence that the CSI-free scheme performs favorably under periodic traffic conditions, but it may be deficient in case of Poisson traffic, specially if the setup is not optimally configured.  Finally, we show the prominent performance results when the uplink transmissions are periodic, while highlighting the need of a minimum mean square error equalizer rather than zero-forcing for information decoding.
	\end{abstract}
	\begin{IEEEkeywords}
		WET, massive IoT, WPCN, CSI-free, energy beamforming, periodic traffic, Poisson traffic, MMSE, ZF
	\end{IEEEkeywords}
	\section{Introduction}\label{introduction}
	The Internet of Things (IoT) is a major technology trend that promises to interconnect
	\textit{everything} towards building a data-driven society enabled by near-instant unlimited wireless connectivity \cite{Matti.2019,Mahmood.2020}.
	A key feature/challenge of the IoT is the massive connectivity since around 80 billion connected devices are foreseen to proliferate globally by 2025, thus resulting in a massive technology-led disruption across all industries \cite{sullivan.2020}.
	
	The IoT ranges from cloud (e.g., data centers, super computers, internet core network) and fog (e.g., computers, smartphones, smart appliances) technologies, to edge (e.g., wearables, smart sensors, motes) and extreme edge (e.g., smart dust and zero-power sensors) technologies \cite{Portilla.2019}. Energy efficiency and/or power consumption criteria become more critical as one descends over such layers. In fact, edge or extreme edge devices are usually powered by batteries or energy harvesters and are very limited in computing and storage capabilities to reduce costs and enlarge lifetime. 
	Many types of energy harvesting (EH) technologies are under consideration, but those relying on wireless  radio frequency (RF) signals are becoming more and more attractive.  
	RF-EH provides key benefits such as \cite{Niyato.2017,Lopez.2019}: i) battery charging without physical connections, which significantly simplify the servicing and maintenance of battery-powered devices; ii) readily available service in the form of transmitted energy (TV/radio broadcasters, mobile base stations and handheld radios), iii) low cost and form factor reduction of the end devices; iv) increase of durability and reliability of end devices thanks to their contact-free design; and v) enhanced energy efficiency and network-wide reduction of emissions footprint.
	
	RF-EH is a wide concept\footnote{Herein we focus on RF-EH networks where the RF signals are intentionally transmitted for powering the EH devices. Alternatively, the devices may 		opportunistically harvest energy from RF signals of different frequencies already in their surrounding environment and to which they are sensitive. The latter is known as  ambient RF EH, and readers can refer to \cite{Ghazanfari.2016} for an overview.} that encompasses two main scenarios when combined with Wireless Information Transfer (WIT), namely  Wireless Powered Communication Network (WPCN) and Simultaneous Wireless Information and Power Transfer (SWIPT) \cite{Lopez.2019}. In the first scenario,  a Wireless Energy Transfer (WET) process occurs in the downlink in a first phase and WIT takes place in the second phase. Meanwhile, in the second scenario, WET and WIT occur simultaneously.
	An overview of the recent advances on both architectures can be found in \cite{Clerckx.2019}, while herein the discussions will focus on WPCN and pure WET setups. Notice that WET may have a much more significant role than WIT in practical applications as highlighted in \cite{Lopez.2019}. This is because WET's duration is often required to be the largest i) in order to harvest usable amounts of energy, and/or ii) due to sporadic WIT rounds, e.g., event-driven traffic.
	Since SWIPT may happen just occasionally, WPCN use cases are often of much more practical interest. 
	Therefore, enabling efficient WPCNs is mandatory  \cite{Mahmood.2020,Lopez.2019,Mahmood.2019}, and constitutes the scope of this work.
	\subsection{Related Work}\label{RW}
	Over the past few years, the analysis and optimization of WPCNs has evolved  from the simple Harvest-then-Transmit (HTT) protocol \cite{Ju.2014,Lopez.2017,Huang.2016}  towards more evolved alternatives that are capable of boosting the system performance either via cooperation \cite{Chen.2015}, power control \cite{Lopez.2018}, rate allocation schemes \cite{LopezFernandez.2018} and/or  retransmissions \cite{Makki.2016}. However, most of the works so far are concerned with rather optimistic setups where either i) most of the power consumption sources at the EH devices are ignored, ii) Channel State Information (CSI) procedures are assumed cost free, and iii) only one or few EH devices are powered. Regarding the latter, the number of EH devices is often not greater than the number of powering antennas such that full gain from energy beamforming (EB) is attained in the WET phase, e.g., \cite{Huang.2016,Cantos.2019,Du.2020}. For instance,  a setup where a multi-antenna hybrid access point (HAP) transfers power to the devices via EB, followed by the devices sending their data simultaneously by consuming the harvested energy, is investigated in \cite{Cantos.2019}. The authors cast a max-min rate optimization problem with practical non-linear EH and solve it via several iterative optimization methods. However, no other power consumption sources besides transmissions are considered, and Zero Forcing (ZF) equalization is used for information decoding at the HAP  without analysing the CSI acquisition costs. Meanwhile, the authors in \cite{Du.2020} do consider the CSI acquisition costs when optimizing the HAP pilots power and the power allocated to the energy transmission, while the EH devices are under the effect of several power consumption sources. Yet, the imposition of having more antennas than devices may be strong towards  future low-power massive IoT networks. Finally, the lack of a traffic source model for data transmissions is also a strong limitation for most of the works, which intrinsically assume full-buffer EH devices, e.g., \cite{Ju.2014,Lopez.2017,Huang.2016,Chen.2015,Lopez.2018,LopezFernandez.2018,Makki.2016,Cantos.2019,Du.2020}.
	
	One important observation is that the gains from EB decrease quickly as the number of EH IoT devices increases \cite{Lopez.2019}. This holds even without accounting for the considerable energy resources demanded by CSI acquisition. Therefore, in massive deployment scenarios, the broadcast nature of wireless transmissions should be intelligently exploited for powering simultaneously a massive number of IoT devices with minimum or no CSI \cite{Lopez.2019,LopezMahmood.2020}. To that end, the authors in \cite{Clerckx.2018} propose a new form of signal design for WET relying on phase sweeping transmit diversity, which forces the multiple antennas to induce fast fluctuations of the wireless channel and does not rely on any form of CSI. This is accomplished by exploiting the non-linearity of the EH circuitry, however, the attained diversity gain is quite small even when the transmitter is equipped with massive antenna arrays. Meanwhile, several multi-antenna CSI-free WET solutions have been recently proposed and analyzed in \cite{Lopez.2019_CSI,Lopez.2020} to improve the statistics of the RF energy availability at the input of the EH circuitry of a massive set of energy harvesters:
	\begin{itemize}
		\item One Antenna ($\mathrm{OA}$), under which the power beacon (PB) transmits with only one antenna; 
		\item  All Antennas transmitting the Same Signal ($\mathrm{AA-SS}$), under which the PB transmits the same signal simultaneously with all antennas but with reduced power at each; 
		\item All Antennas transmitting Independent Signals ($\mathrm{AA-IS}$), under which the PB  transmits power signals independently generated across the antennas; and 
		\item Switching Antenna ($\mathrm{SA}$), under which the PB transmits with full power by one antenna at a time such that all antennas are used during a coherence block\footnote{All antennas need to be used (at least once but never concurrently) in a coherence block, which can be easily guaranteed without specific CSI in static or semi-static setups (as typical in WPCNs), where fading is sufficiently slow.}. 
	\end{itemize}
	Notice that i) $\mathrm{OA}$ is the simplest scheme since it does not take advantage of the multiple spatial resources, while ii) $\mathrm{AA-SS}$ may reach considerable gains in terms of average harvested energy under Line of Sight (LOS) but it is highly sensitive to the different mean phases of the LOS channel component,  and iii) $\mathrm{AA-IS}$, $\mathrm{SA}$ do not improve the average energy availability but do provide transmit diversity. It was demonstrated in \cite{Lopez.2020} that devices closer to the PB benefit more from $\mathrm{AA-IS}$, while those that are far, and more likely to operate near their sensitivity level, benefit more from the $\mathrm{SA}$. 
	All these CSI-free WET schemes have been considered without the information communication component typical of a WPCN, and consequently, their influence on the overall system performance is so far unclear. 
	\subsection{Contributions and Organization of the Paper}\label{contributions}	
	This paper aims at analyzing for the first time the gains from operating with/without CSI for powering massive low-power IoT deployments with uplink transmission requirements. Specifically, we consider a WPCN where a massive set of IoT nodes require occasional uplink information transmissions to a HAP, which in turn is constantly transferring RF energy to them in the downlink.
	Herein, we adopt the $\mathrm{SA}$ strategy \cite{Lopez.2019_CSI,Lopez.2020} as the CSI-free WET scheme, which, besides the benefits aforementioned, allows a better coupling to the co-located information transmission processes. The latter is because   only one antenna is used for WET at any time, while the remaining antennas stay silent, thus all these idle antennas may be used for uplink information decoding in WPCN setups. For information decoding in the uplink, the HAP implements either ZF or the Minimum Mean Square Error (MMSE) equalization. The latter is shown to provide large performance gains for the WPCN under consideration when compared to ZF, mainly because of the low-rate low-power transmissions, which are typical in the analyzed scenario.
	The
	main contributions of this work are listed as follows:
	\begin{itemize}
		\item We investigate and analyze a WPCN setup under CSI-based and CSI-free powering schemes. We are concerned with the overall outage probability, which encompasses both WET and WIT processes' failures. 
		The performance is evaluated in terms of the worst node's performance such that we can assure Quality of Service (QoS) guarantees for all nodes in the network. We consider the power consumption from several sources, e.g., transmission, circuitry, and CSI-acquisition procedures;
		\item We decouple WET and WIT processes and cast a max-min WET optimization problem when CSI is available at the HAP. 
		We provide analytical bounds on the performance of the CSI-based WET beamforming by relying on Maximum Ratio Transmission (MRT). We show that the MRT is near the fairest EB, e.g., the EB that provides max-min performance guarantees, even in a massive deployment, if the farthest EH node experiences at least 3 dB of power attenuation more than the remaining devices;
		\item We consider two types of information traffic sources: i) periodic traffic, such that the network is perfectly synchronized; and ii) Poisson traffic, which is uncoordinated and random. The overall performance is analyzed for both traffic profiles. Our results not only evidence that the system performance deteriorates under Poisson random access when compared to deterministic traffic, but also that it is more challenging to optimally configure the network. We cast an optimization problem to determine the optimum pilot reuse factor such that the collision probability keeps below a certain limit. A solution algorithm is provided and shown to converge in few iterations;
		\item The impact of the CSI-based and CSI-free scheme on the WET performance is analytically analyzed and several trade-offs are identified. It is shown that the CSI-free scheme is preferable as the number of IoT devices increases and/or the CSI acquisition costs increase. In terms of overall performance, the CSI-free scheme is shown to perform favorably under periodic traffic conditions, but it may be deficient in case of Poisson traffic, specially if the setup is not optimally configured.
	\end{itemize}
	
	Next, Section~\ref{system} presents the system model and assumptions, Section~\ref{wet} discusses the energy outage performance under the CSI-based and CSI-free WET schemes, while Section~\ref{WIT} addresses the information outage performance under ZF and MMSE decoding schemes. Section~\ref{results} presents and discusses numerical results. Finally,
	Section~\ref{conclusions} concludes the paper.
	
	\textbf{Notation:} Boldface lowercase letters denote column vectors, while boldface uppercase letters denote matrices. For instance, $\mathbf{x}=\{x_i\}$ where $x_i$ is the $i$-th element of vector $\mathbf{x}$; while $\mathbf{X}=\{X_{i,j}\}$ where $X_{i,j}$ is the $i$-th row $j$-th column element of matrix $\mathbf{X}$. By $\mathbf{I}$ we denote the identity matrix, and by $\mathbf{1}$ we denote a vector of ones. Superscripts $(\cdot)^T$ and $(\cdot)^H$ denote the transpose and conjugate transpose operations, while $\Tr(\cdot)$ and $\mathrm{diag}(\mathbf{x})$ denote the trace operator and a diagonal matrix with elements $\{x_i\}$, respectively. $\mathbb{C}$, $\mathbb{R}$ and $\mathbb{Z}^+$ are the set of complex, real and non-negative integer numbers, respectively; while $\bm{i}=\sqrt{-1}$ is the imaginary unit and $\Im(x)$ denotes the imaginary part of $x\in\mathbb{C}$. The absolute/cardinality operations in case of scalars/sets is denoted as $|\cdot|$, while $||\mathbf{x}||$ denotes the euclidean norm of vector $\mathbf{x}$.
	Additionally, $\lfloor\cdot\rfloor$ and	$\lceil\cdot\rceil$ are the floor and ceiling functions, respectively, while $\sup\{\cdot\}$ and $\inf\{\cdot\}$ are the supremum and infimum notations.
	The curled inequality symbol $\succeq$ is used  to indicate positive definiteness of a matrix, while $\mathcal{O}(\cdot)$ is the big-O notation.
	$\mathbb{E}_X[\!\ \cdot\ \!]$ denotes expectation with respect to random variable (RV) $X$, which is characterized by a Probability Density Function (PDF) $f_X(x)$ and Cumulative Distribution Function (CDF) $F_X(x)$, while $\mathbb{P}[A]$ is the probability of event $A$.
	Also, $\sum_Y X$ denotes the sum of $Y$ RVs distributed as $f_X(x)$.
	$\mathbf{c}\sim\mathcal{CN}(\bm{\mu},\mathbf{R})$ is a circularly-symmetric Gaussian complex random vector with mean $\bm{\mu}$ and covariance $\mathbf{R}$, while $Y\sim\chi^2(\varphi,\psi)$ is a non-central chi-squared RV with $\varphi$ degrees of freedom and parameter $\psi$ such that \cite{Kobayashi.2011}
	\begin{align}
	F_Y(y)&=1-\mathcal{Q}_{\varphi/2}\big(\sqrt{\psi},\sqrt{y}\big),\label{cdf}
	\end{align}
	where $\mathcal{Q}_a(\cdot)$ denotes the Marcum Q-function, which is given in \cite[Eq. (1)]{Nuttall.1975}.
	\section{System model}\label{system}
	We consider the scenario depicted in Fig.~\ref{Fig1}. In the downlink, a HAP  wirelessly  powers  a large set $\mathcal{S}=\{s_i\}$ of $\mathrm{S}$ single-antenna EH sensor nodes located nearby. Such low-power devices require in turn to sporadically send  some short data messages of $k$ bits/Hz over time blocks of $t$ seconds in the uplink. The HAP is equipped with $M$ antennas, $M_t$ of which  are used for downlink energy transmission, and the remaining $M_r=M-M_t$ for information decoding in the uplink.
	We assume that the coherence time $T_c$ is sufficiently large such that $t\le T_c/M$ for any feasible $M$. On the one hand, notice that since the RF-EH devices are extremely-low-power nodes, they are foreseen to be mostly static devices, thus, the coherence time is large. On the other hand, such devices are expected to transmit for short times due to intrinsically small data payloads, low-latency requirements, and/or lack of energy resources to support longer transmissions \cite{Khan.2016}. 
	Then, by limiting for instance the analysis of this work to $M\le M_0$ we can set $t=T_c/M_0$, although extending any of our analyses for any other smaller $t$ would be straightforward.
	\begin{figure}[t!]
		\centering  \includegraphics[width=0.7\columnwidth]{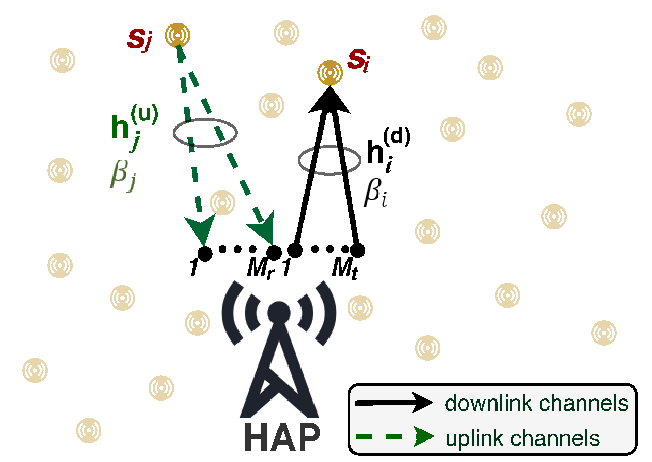}
		\caption{System model: a HAP equipped with $M$ antennas powers wirelessly in the downlink a set $\mathcal{S}$ of single-antenna sensor nodes located nearby, while it receives information from a subset of them in the uplink.}	
		\label{Fig1}
	\end{figure}
	\subsection{Channel model}
	The average channel gain between the HAP and $s_i$ is denoted as $\beta_i$, e.g. the path loss is $1/\beta_i$, while the small-scale fading channel coefficient between the HAP's antennas and $s_i$ (downlink) is denoted as  $\mathbf{h}^{(d)}_i\in\mathbb{C}^{M_t\times 1}$, and the channel between $s_i$ and HAP's antennas (uplink) is denoted as $\mathbf{h}^{(u)}_i\in\mathbb{C}^{M_r\times 1}$. Notice that even when the network is configured to operate over the same frequency band in uplink and downlink, the channel reciprocity is difficult to hold in this kind of setup since devices at both ends are extremely different \cite{Guillaud.2005}, hence we assume fully independent uplink and downlink channels\footnote{Even when certain dependence may exist, this does not affect significantly our results. This is because WIT phases in WPCNs are mostly sporadic, then, the aggregated harvested energy between consecutive WIT phases is much less dependent on the fading experienced in a particular coherence block.}. 
	
	The antenna elements are sufficiently separated such that the fading seen at each antenna can be assumed independent. We assume quasi-static channels undergoing Rician fading, i.e., $\mathbf{h}^{(d)}_i,\mathbf{h}^{(u)}_i\sim\mathcal{CN}\big(\sqrt{\frac{\kappa}{1+\kappa}}\bm{1}_{M_{\{t,r\}}\times 1},\frac{1}{1+\kappa}\mathbf{I}_{M_{\{t,r\}}\times M_{\{t,r\}}}\big)$, which is a very general assumption that allows  modeling a wide variety of channels by tuning the Rician factor $\kappa\ge 0$ \cite[Ch.2]{Proakis.2001}, e.g., when $\kappa=0$ the channel envelope is Rayleigh distributed, while when $\kappa\rightarrow\infty$ there is a fully deterministic LOS channel.
	\subsection{Transmission model}\label{txmodel}
	We assume homogeneous (in terms of hardware, supported services and traffic characterization) IoT devices which are harvesting RF energy from HAP's transmissions. They require $p_c$ power units to keep active, otherwise they are in outage. This value obviously depends on their circuitry but also on the services to support. Additionally, the EH devices need to report their data to the HAP at some moments, so they briefly interrupt  (during $t$ seconds) the EH to send it. We model such transmission activation in two different ways, by considering \cite{Nikaein.2013}:
	\begin{itemize}
		\item periodic traffic, such that the network is perfectly synchronized and every EH device has a predefined slot allocated for transmission. If the periodicity is $t_s$, then there are $\lfloor t_s/t\rfloor$ slots available. If $\mathrm{S}\le \lfloor t_s/t\rfloor$, then each device operates alone in the channel; otherwise there will be up to $\lceil \mathrm{S}/\lfloor t_s/t\rfloor\rceil$ concurrent transmissions eventually;
		\item Poisson traffic, such that the network traffic is uncoordinated. Let us take $\lambda$ as the mean number of messages per coherence time that are required to be transmitted by each device. Notice that it is evident that $\lambda<1$ needs to hold according to our previous discussions. 
	\end{itemize}
	It is worth noting that neither the periodic nor the Poisson model are suitable for mimicking bursty traffic, for which other more suitable models are recommended \cite{Nikaein.2013}. However, a WPCN implementation is not suitable in scenarios requiring bursty transmissions mostly due to its inherent and strict energy limitations, thus we resorted to the above simple but effective models covering two extreme ends. Additionally, note that the multiple antennas at the HAP require to be exploited for spatially separating  the concurrent transmissions with high reliability. We delve into the specific details in Section~\ref{WIT}.

	Finally, $p_i$ denotes the fixed transmit power of $s_i$, while $\xi_\mathrm{csi}^{(u)},\ \xi_\mathrm{csi}^{(d)}$ represent the energy resources\footnote{Notice that the time resource for $\xi_\mathrm{csi}^{(d)}$ is limited by $T_c/\mathrm{S}$, while for $\xi_\mathrm{csi}^{(u)}$ is limited by the overall transmission duration $t$. In fact,  we assume in our analyses that the uplink pilot training phase is much shorter than the actual data transmission and ignore its impact on the information outage performance in Section~\ref{WIT}.} (power $\times$ time) utilized by such EH node to let the HAP know the uplink and downlink CSI, respectively. 
    Notice that since channel reciprocity does not hold, it is expected that $\xi_\mathrm{csi}^{(u)}<\xi_\mathrm{csi}^{(d)}$ as transmissions from the EH devices are required in both downlink and uplink (pilot transmissions in uplink, feedback in downlink), but decoding/processing the pilots sent by the HAP is also required in the downlink. Note that although we do not model an imperfect CSI acquisition (perfect CSI is assumed whenever required)\footnote{However, some results evincing the degenerative effect of imperfect uplink CSI acquisition are discussed in Section~\ref{results}.}, e.g., due to estimation and quantization errors and processing delay, we do consider the associated  energy consumption costs. 
	\subsection{Performance evaluation}\label{performance}
	We adopt the outage probability formulation as the main performance metric. We say $s_i$ is in outage when: i) the harvested energy was insufficient for supporting its operation and consequently no uplink data transmission occurred: \textit{energy outage} $\mathcal{O}_i^{(d)}$, or ii) uplink transmission occurred but the transmitted message could not be decoded at the HAP: \textit{information outage} $\mathcal{O}_i^{(u)}$. Since downlink and uplink channels are independent and transmit powers are fixed we have that $s_i$'s outage probability is given by 
	\begin{align}
	\mathcal{O}_i&=1-\big(1-\mathcal{O}_i^{(d)}\big)\big(1-\mathcal{O}_i^{(u)}\big)\nonumber\\
	&=\mathcal{O}_i^{(d)}+\mathcal{O}_i^{(u)}-\mathcal{O}_i^{(d)}\mathcal{O}_i^{(u)}.\label{out}
	\end{align}
	Finally, the network performance is evaluated in terms of the worst node's performance by computing the network outage probability as
	\begin{align}
	\mathcal{O}=\sup_{i=1,\cdots,\mathrm{S}}\{O_i\}.
	\end{align}
	\begin{remark}
		Then, we can assure that every EH device in the network performs reliably at least the $(1-\mathcal{O})\%$ of time. Notice that for $\mathcal{O}_i\ll 1$, the term $\mathcal{O}_i^{(d)}+\mathcal{O}_i^{(u)}$  dominates \eqref{out}. Since this is required in practical scenarios, we can examine independently the bounds on $\mathcal{O}_i^{(d)}$ and $\mathcal{O}_i^{(u)}$.
	\end{remark}	
	\section{Wireless Energy Transfer}\label{wet}
	In Subsection~\ref{csi}, we first propose a CSI-based precoding scheme for optimizing the WET process. Then, we address the CSI-free WET alternative in Subsection~\ref{free}. The energy outage performance under both CSI-based and CSI-free WET schemes is also analyzed therein.
	\subsection{CSI-based WET}\label{csi}
	In each coherence block time, the HAP sends pilot signals that are used by the EH devices to estimate the downlink channels. Then, such information is fedback  to the HAP through the uplink channels in an ordered way. As commented before, in such processes, the EH devices spend $\xi_\mathrm{csi}^{(d)}$ energy units each time, which is approximately given  as 
	\begin{align}
	\xi_\mathrm{csi}^{(d)}\approx M_t\xi_0,\label{csid}
	\end{align}
	where $\xi_0$ denotes the energy required for decoding, processing and sending back to the HAP the information related to the pilot signals coming from each antenna. As we will show later in Subsection~\ref{users}, very often, the HAP only requires  the WET-CSI from a small set of EH devices, and therefore it is expected that their CSI feedback can be scheduled without overlapping.
	
	As there are $M_t$ transmit antennas, the HAP is able to transmit $M_t$ energy beams to broadcast energy to all sensors in $\mathcal{S}$. Then, the incident RF power at $s_i$ is given by
	\begin{align}\label{Ei}
	E_i^\mathrm{rf}&=\!\mathbb{E}_x\bigg[\!\Big(\!\sqrt{P\beta_i}(\mathbf{h}_i^{\!(d)})^T\!\sum_{j=1}^{M_t}\!\mathbf{w}_jx_j\Big)^{\!\!H}\!\Big(\!\sqrt{P\beta_i}(\mathbf{h}_i^{\!(d)})^T\!\sum_{j=1}^{M_t}\!\mathbf{w}_jx_j\!\Big)\!\bigg]\nonumber\\
	&=\!P\beta_i\sum_{j=1}^{M_t}\!\big|(\mathbf{h}_i^{(d)})^T\mathbf{w}_j\big|^2,
	\end{align}
	where $P$ is the HAP's transmit power, $\mathbf{w}_j\in\mathbb{C}^{M_t\times 1},\  j=1,\cdots,M_t$, denotes the precoding vector for generating the $j-$th energy beam, and $x_j$ is its normalized energy carrying signal, i.e., $\mathbb{E}[x_j^Hx_j]=1$, which is independently generated across the antennas, i.e., $\mathbb{E}[x_j^Hx_l]=0, \forall j\ne l$. 
	\subsubsection{Energy beamforming}
	For our setup and performance evaluation criterion, the optimum precoder $\{\mathbf{w}_j\}$ is the one that minimizes $\sup_{i=1,\cdots,\mathrm{S}}\ \{\mathcal{O}_i^{(d)}\}$. However, since the set $\{h_i^{(d)}\}$ is  known by the HAP after the CSI acquisition procedures, the problem translates to maximize $\inf_{i=1,\cdots,\mathrm{S}}\ \{E_i^\mathrm{rf}\}$ subject to $\sum_{j=1}^{M_t}||\mathbf{w}_j||^2\le 1$. The previous objective function is not concave and therefore the problem is not convex. However, it
	can still be optimally solved by rewriting it as a semi-definite programming (SDP) problem \cite{Thudugalage.2016} as given next.
	\begin{proposition}
		The optimum precoder $\{\mathbf{w}_j\}$ matches the normalized eigenvectors of $\mathbf{W}$, which is the solution of
		\begin{subequations}\label{P}
			\begin{alignat}{2}
			\mathbf{P:}\  &\underset{\mathbf{W}\in\mathbb{C}^{M_t\times M_t}, \zeta}{\mathrm{minimize}}       &\ \ & 
			-\zeta \label{P:a}\\ 
			&\text{subject to} & &  P\beta_i\Tr(\mathbf{W}\mathbf{H}_i^{(d)})\ge \zeta, \  i\!=\!1,\cdots,\mathrm{S} \label{P:b}\\ 
			& & &\qquad \Tr(\mathbf{W})= 1 \label{P:c}\\ 
			& & &\qquad\qquad \mathbf{W}\succeq 0, \label{P12:d}
			\end{alignat}	
		\end{subequations}
	    where  $\mathbf{H}_i^{(d)}=\mathbf{h}_i^{(d)}\mathbf{h}_i^{(d)H}$ and $\zeta$ is an auxiliary optimization variable.
	\end{proposition}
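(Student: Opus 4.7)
The plan is to apply the standard semidefinite relaxation (SDR) technique: substitute the single Hermitian matrix variable $\mathbf{W}=\sum_{j=1}^{M_t}\mathbf{w}_j\mathbf{w}_j^H$ in place of the precoder collection $\{\mathbf{w}_j\}$, show that this recasts the problem as $\mathbf{P}$, and then argue that the relaxation is \emph{tight} because the ambient dimension of $\mathbf{W}$ equals the number $M_t$ of available beams. First I would rewrite the max-min problem in epigraph form with auxiliary variable $\zeta=\inf_i E_i^{\mathrm{rf}}$, yielding: maximize $\zeta$ over $\{\mathbf{w}_j\}$ subject to $P\beta_i\sum_{j=1}^{M_t}|(\mathbf{h}_i^{(d)})^T\mathbf{w}_j|^2\ge\zeta$ for every $i$, and $\sum_{j=1}^{M_t}\|\mathbf{w}_j\|^2\le 1$. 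Using the identities $\sum_j |(\mathbf{h}_i^{(d)})^T\mathbf{w}_j|^2 = \sum_j \mathbf{w}_j^H\mathbf{H}_i^{(d)}\mathbf{w}_j = \Tr(\mathbf{W}\mathbf{H}_i^{(d)})$ and $\sum_j \|\mathbf{w}_j\|^2=\Tr(\mathbf{W})$, together with monotonicity of the objective in the power budget (so that the total-power constraint is tight at the optimum), the problem becomes the convex SDP $\mathbf{P}$, which can be solved efficiently by standard interior-point methods.

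The remaining---and most delicate---step is to prove that the SDR is lossless, so that the optimum of $\mathbf{P}$ is attained by a physically realizable precoder. To do so, I would pick any optimizer $\mathbf{W}^\star$ of $\mathbf{P}$ and write its spectral decomposition $\mathbf{W}^\star=\sum_{k=1}^{r}\lambda_k\mathbf{v}_k\mathbf{v}_k^H$, with $\lambda_k>0$, $\|\mathbf{v}_k\|=1$ and $r\le M_t$. Setting $\mathbf{w}_k^\star=\sqrt{\lambda_k}\,\mathbf{v}_k$ for $k\le r$ and $\mathbf{w}_k^\star=\mathbf{0}$ for $r<k\le M_t$ then recovers a set of $M_t$ physical precoders that exactly reproduce $\Tr(\mathbf{W}^\star)=\sum_k\lambda_k$ and $\Tr(\mathbf{W}^\star\mathbf{H}_i^{(d)})=\sum_k \mathbf{w}_k^{\star H}\mathbf{H}_i^{(d)}\mathbf{w}_k^\star$ for every $i$, by direct substitution. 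Hence the SDP optimum is achievable, coincides with the original max-min RF power, and is reached by the $\mathbf{w}_k^\star$ just described, i.e., by the normalized eigenvectors of $\mathbf{W}^\star$ scaled by the square roots of their respective eigenvalues.

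The main obstacle, and really the only subtle point, is recognizing why the typical SDR ``rank issue'' does not arise here: in rank-one beamforming or multicast problems only a rank-one $\mathbf{W}^\star$ is physically realizable, so tightness is not guaranteed and tools such as Gaussian randomization or Pataki-type rank bounds must be invoked. In the present setting, however, the allowed rank $M_t$ coincides with the ambient dimension of $\mathbf{W}^\star\in\mathbb{C}^{M_t\times M_t}$, so every PSD solution of $\mathbf{P}$ is automatically realizable as a collection of $M_t$ precoders through its eigendecomposition. Once this observation is made explicit, the extraction of $\{\mathbf{w}_j^\star\}$ as the (scaled) normalized eigenvectors of $\mathbf{W}^\star$ is immediate, which is precisely the claim of the proposition.
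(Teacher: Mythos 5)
Your proposal is correct and follows essentially the same route as the paper: lift $\{\mathbf{w}_j\}$ to $\mathbf{W}=\sum_j\mathbf{w}_j\mathbf{w}_j^H$, use the trace identities to obtain the epigraph-form SDP $\mathbf{P}$, and recover the precoders as eigenvectors of $\mathbf{W}$ scaled by the square roots of their eigenvalues. The only difference is that you spell out two points the paper leaves implicit---that the power constraint may be taken with equality by a scaling argument, and that the relaxation is lossless because up to $M_t$ beams are available so any feasible $\mathbf{W}\succeq 0$ of rank $r\le M_t$ is realizable---which is a welcome clarification rather than a deviation.
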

	\begin{proof}
		Let us define $\zeta \triangleq \inf_{i} \{E_i^\mathrm{rf}\}$, while $E_i^\mathrm{rf}$ in \eqref{Ei} can be rewritten as
		\begin{align}
		E_i^\mathrm{rf}=P\beta_i\sum_{j=1}^{M_t}\mathbf{h}_i^{(d)H}\mathbf{w}_j\mathbf{w}_j^H\mathbf{h}_i^{(d)}=P\beta_i\Tr(\mathbf{W}\mathbf{H}_i^{(d)}),\label{Ei2}
		\end{align}
		where $\mathbf{W}=\sum_{j=1}^{M_t}\mathbf{w}_j\mathbf{w}_j^H$ is a Hermitian matrix (with maximum rank $\min (\mathrm{S},M_t)$) that can be found by solving \eqref{P}. Notice that  \eqref{P:c} corresponds to the power budget constraint. Finally, the beamforming vectors $\{\mathbf{w}_j\}$ match the eigenvectors of $\mathbf{W}$ but normalized by their corresponding eigenvalues' square roots such that $\Tr(\mathbf{W})=1$.
	\end{proof}
	This procedure allows finding the optimum precoding vectors, and hereinafter it is referred to as CSI-based beamforming.
	\begin{remark}	\label{re2}	
	Interior point methods are mostly adopted to efficiently solve SDP problems. Since $\mathbf{P}$ consists of a linear function, $\mathrm{S}+1$ linear
	constraints, one positive semi-definite constraint, and the more challenging optimization variable has size $M_t \times M_t$, interior point methods will take $\mathcal{O}(M_t \log(1/\epsilon))$ iterations, with each iteration requiring at most $\mathcal{O}(M_t^6 + (\mathrm{S} + 1)M_t^2)$ arithmetic operations \cite{Ye.2011}, where $\epsilon$ represents the solution accuracy
	at the algorithm's termination. In addition, an eigendecomposition of $\mathbf{W}$, which has complexity $\mathcal{O}(M_t^3)$, is required in order to derive the set of beamforming vectors. Consequently, the
	SDP solution becomes computationally costly as the number of HAP's transmit antennas and/or the number of EH devices increases.
    \end{remark}
	\subsubsection{Energy outage lower bound}
	Notice that
	\begin{align}
	\sup_i\{\mathcal{O}_i^{(d)}\}&\ge \inf_{\{\mathbf{w}_j\},\forall j}\{\mathcal{O}_{i'}^{(d)}\},\label{O1}
	\end{align}
	where $s_{i'}$ is the sensor under the greatest path loss: $\beta_{i'}\!\le\! \beta_i,\ \forall s_i\in \mathcal{S}$, e.g., the farthest sensor. The above expression strictly holds  as long as we consider the same energy requirements for all devices, e.g., homogeneous devices with the same  transmit power $p_i=p,\ \forall s_i\in \mathcal{S}$. However, \eqref{O1} should also hold when intelligent power allocation polices are utilized.
	
	In the best possible scenario, where the HAP requires compensating only the channel impairments of $s_{i'}$ since the remaining nodes are under more favorable channel/propagation conditions, a MRT precoding will be the optimum. Such MRT precoding is indistinctly and equivalently given  by
	\begin{align}
	\mathbf{w}_j=\frac{1}{\sqrt{M_t}}\frac{\mathbf{h}_{i'}^{(d)*}}{||\mathbf{h}_{i'}^{(d)}||},\ \forall j,\ \ \mathrm{or}\
	\left\{\begin{array}{ll}
	\frac{\mathbf{h}_{i'}^{(d)*}}{||\mathbf{h}_{i'}^{(d)}||},& j=1\\
	\bm{0}, & j>1
	\end{array}\right.,\label{wj}
	\end{align} 
	for which $E_{i'}^\mathrm{rf}$ in each coherence interval becomes
	\begin{align}
	E_{i'}^\mathrm{rf}&=P\beta_{i'}||\mathbf{h}_{i'}^{(d)}||^2\sim \frac{P\beta_{i'}}{2(1+\kappa)}\mathfrak{X},\label{Ei'} 
	\end{align}
	which comes from using \cite[Eq.~(45)]{Lopez.2019_CSI} and setting $\mathfrak{X}\sim \chi^2(2M_t,2M_t\kappa)$.
	
	We assume that the energy harvested between consecutive uplink transmissions requires to be enough  for powering the circuits, performing the CSI acquisition procedures, and sending an uplink information message, while the remaining (if any) energy is used in other tasks, e.g., sensing, signal processing, etc. Therefore,
	\begin{itemize}
		\item for periodic traffic, the total energy harvested by $s_{i'}$ between its uplink transmissions is at most given by
		\begin{align}
		E_{i'}&\stackrel{(a)}{=}\eta T_c \sum_{\lceil t_s/t_c\rceil} E_{i'}^\mathrm{rf} \stackrel{(b)}{\sim}\frac{\eta T_c P\beta_i}{2(1+\kappa)} \sum_{{\lceil t_s/T_c\rceil}}\mathfrak{X}\nonumber\\
		&\stackrel{(c)}{\sim}\frac{\eta T_c P\beta_i}{2(1+\kappa)} \chi^2\big(2M_t  \lceil  t_s/T_c\rceil, 2M_t\kappa \lceil t_s/T_c\rceil \big),
		\end{align}
		where in $(a)$, $\eta\in(0,1)$ denotes the energy conversion efficiency\footnote{Note that we are considering a simple linear EH model as in \cite{Ghazanfari.2016,Ju.2014,Lopez.2017,Huang.2016,Chen.2015,LopezFernandez.2018,Makki.2016,Khan.2016,Thudugalage.2016} to allow some analytical tractability and facilitate the discussions. Although the specific performance results must vary when utilizing different EH models (as those in \cite{Lopez.2018,Lopez.2019_CSI,Clerckx.2019,Cantos.2019,Du.2020,Clerckx.2018,Lopez.2020}), the trends and relative performance gaps between the CSI-based and CSI-free schemes are expected to hold.}, and the summation is over $\lceil t_s/T_c\rceil$ independent RVs of the form of $E_{i'}^\mathrm{rf}$, $(b)$ comes from using \eqref{Ei'}, while $(c)$ follows after using the definition of a non-central chi-square RV. Notice that although we conveniently used  $\lceil t_s/t_c\rceil$ to take advantage of a finite summation, the last expression holds without such a constraint.
		
		Meanwhile, the energy requirements under periodic traffic are given by
		\begin{align}
		E_0&=\big\lceil t_s/T_c\big\rceil\xi_\mathrm{csi}^{(d)}+\xi_\mathrm{csi}^{(u)}+p_c t_s+p t,
		\end{align}
		thus \eqref{O1} becomes
		\begin{align}
		\sup_i&\{\mathcal{O}_i^{(d)}\}\nonumber\\
		\ge& \mathbb{P}\big[E_{i'}<E_0\big]\nonumber\\
		=& 1\!-\!\mathcal{Q}_{M_t\lceil \frac{t_s}{T_c} \rceil}\!\Bigg(\sqrt{2M_t\kappa\lceil\tfrac{t_s}{T_c}\rceil},\sqrt{\frac{2E_0(\kappa+1)}{\eta T_c P\beta_{i'}}}\Bigg),\label{per}
		\end{align}
		which comes from using the CDF of a non-central chi-square RV; while 
		\item  for Poisson traffic, the messages arrive with an exponential inter-arrival random time $U$ with mean $1/\lambda$ (given in coherence intervals). For analytical tractability, let us assume that transmissions also occur  in an slotted fashion, where slots are of duration $t$. Then, devices with a ready-to-send message wait for the next time slot for transmission. 
		
		Let us denote the inter-arrival RV  by $V$, which is now discrete and with PMF given by
		\begin{align}
		\mathbb{P}[V=v]&=\mathbb{P}[v-1\le U< v]\nonumber\\
		&=F_U(v)-F_U(v-1)\nonumber\\
		&=(e^{\lambda }-1)e^{-\lambda v}\label{Vpdf}
		\end{align}
		for $v\ge 1$. 
		Now, $E_{i'}$ becomes a random sum of $E_{i'}^\mathrm{rf}$ RVs, i.e.,
		\begin{align}
		E_{i'}=\eta T_c\sum\nolimits_V E_{i'}^\mathrm{rf},\label{Ei''}
		\end{align}
	    while the energy requirements to make the uplink transmissions take place are random as well, and can be written as
		\begin{align}
		E_0=&v\xi_\mathrm{csi}^{(d)}+\xi_\mathrm{csi}^{(u)}+p_c vT_c+p t.\label{E0}
		\end{align}
		Then, the energy outage lower bound is given next.
		\begin{theorem}\label{the2}
			Under Poisson traffic, \eqref{O1} becomes
			\begin{align}
			\sup_i\{\mathcal{O}_i^{(d)}\}&\gtrsim 1-(e^\lambda\!-\!1)\!\!\sum_{v=1}^{v_{\max}}\!e^{-\lambda v}\mathcal{Q}_{M_t v}\Big(\!\sqrt{\!2M_t\kappa v},\nonumber\\
			&\!\frac{2(1\!+\!\kappa)}{\eta T_c P\beta_{i'}}\!\sqrt{v\big(\xi_\mathrm{csi}^{(d)}\!+p_cT_c\big)\!+\!\xi_\mathrm{csi}^{(u)}\!+\!p t}\Big),\label{cfO}
			\end{align}
			where $v_{\max}$ can be chosen arbitrarily large such that $v_{\max}\ge 10\times \frac{e^{\lambda}}{e^\lambda-1}$.	
		\end{theorem}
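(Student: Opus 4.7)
The plan is to reduce the Poisson case to the already-handled periodic case by conditioning on the random inter-arrival time $V$, and then averaging with the PMF in \eqref{Vpdf}. First I would note that, conditionally on $V=v$, the sum in \eqref{Ei''} is a sum of $v$ independent copies of $E_{i'}^{\mathrm{rf}}$ from \eqref{Ei'}. Using the closure of the non-central chi-square family under independent summation (degrees of freedom and non-centrality parameters simply add), this gives
\[
E_{i'}\,\big|\,V=v \ \sim\ \tfrac{\eta T_c P \beta_{i'}}{2(1+\kappa)}\,\chi^2\!\big(2M_t v,\,2M_t\kappa v\big),
\]
exactly as in the periodic derivation but with the deterministic $\lceil t_s/T_c\rceil$ replaced by $v$. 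Conditioned on $V=v$, the threshold $E_0$ in \eqref{E0} is deterministic, so $\mathbb{P}[E_{i'}<E_0\mid V=v]$ is obtained directly by plugging $\varphi=2M_t v$, $\psi=2M_t\kappa v$ and the appropriately scaled threshold into the CDF \eqref{cdf}, yielding the Marcum $Q$-function term that appears inside the summand of \eqref{cfO}.

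Next I would apply the law of total probability,
\[
\mathbb{P}[E_{i'}<E_0]=\sum_{v=1}^{\infty}\mathbb{P}[V=v]\,\mathbb{P}[E_{i'}<E_0\mid V=v],
\]
substituting $\mathbb{P}[V=v]=(e^{\lambda}-1)e^{-\lambda v}$ from \eqref{Vpdf}, expressing the conditional CDF through $1-\mathcal{Q}_{M_t v}(\cdot,\cdot)$, and using $\sum_{v\ge 1}(e^\lambda-1)e^{-\lambda v}=1$ to absorb the constant into the leading $1$. Combined with \eqref{O1}, this produces the exact (infinite-series) counterpart of \eqref{cfO}.

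The main ``obstacle'' is purely practical rather than analytical: the resulting series is infinite, and for numerical use one truncates at $v_{\max}$. I would close the argument by observing that the tail of \eqref{Vpdf} decays geometrically, with
\[
\sum_{v=v_{\max}+1}^{\infty}(e^{\lambda}-1)e^{-\lambda v}=e^{-\lambda v_{\max}},
\]
while $\mathbb{E}[V]=e^{\lambda}/(e^{\lambda}-1)$, so the prescription $v_{\max}\ge 10\,\mathbb{E}[V]$ corresponds to truncating at ten times the mean and renders the omitted tail mass negligible uniformly in the bounded ($\le 1$) Marcum factors. Since truncation drops nonnegative terms from the complementary series, the truncated right-hand side of \eqref{cfO} is a slight overestimate of $\mathbb{P}[E_{i'}<E_0]$ rather than a strict lower bound on it; together with \eqref{O1} this explains why the statement uses $\gtrsim$ instead of $\ge$, with the gap vanishing as $v_{\max}\to\infty$.
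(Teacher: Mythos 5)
Your proposal is correct and follows essentially the same route as the paper's proof: condition on $V=v$, use the closure of the non-central chi-square family under independent summation to get the conditional CDF via \eqref{cdf}, average over $V$ with the PMF \eqref{Vpdf}, and truncate the series at $v_{\max}\ge 10\,\mathbb{E}[V]$ with $\mathbb{E}[V]=e^{\lambda}/(e^{\lambda}-1)$. Your explicit tail-mass computation $e^{-\lambda v_{\max}}$ and the observation that truncation makes the expression an approximate (rather than strict) lower bound, explaining the $\gtrsim$, is a slightly more careful rendering of what the paper states informally.
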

	\begin{proof}
		See Appendix~\ref{App_A}.\phantom\qedhere
	\end{proof}
	\end{itemize}
	\subsubsection{On the optimality of the MRT beamforming}\label{users}
	Let us assume that the HAP is using the MRT beamformer to power the farthest node $s_{i'}$. One question arises: \textit{How such beamformer impacts the wireless powering of the remaining devices?} The following result sheds some light into this question.	
	%
	\begin{theorem}\label{the3}
		The larger
		\begin{align}
		\Omega=\frac{1}{\beta_{i'}}\bigg(\frac{1}{4}\Big(\frac{\kappa}{1\!+\!\kappa/\sqrt{2}}\Big)^2\!+\!\frac{1}{M_t(1\!+\!\kappa/2)}\bigg)\!\inf_{s_i\in\mathcal{S}\backslash s_{i'}}\!\{\beta_i\}\label{Ec}
		\end{align}
		is, the greater the chances of MRT being the optimum beamformer. In fact, when $\Omega>1$, the MRT beamformer is at least half of the time the optimum.		
	\end{theorem}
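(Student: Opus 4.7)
\emph{Proof plan.} My strategy is threefold: (i) reduce the optimality of MRT to a simple sensor-by-sensor power comparison, (ii) lower-bound the expected value of the resulting per-sensor ratio and match that bound to $\Omega$, and finally (iii) lift the mean comparison to the probability statement via a median-type argument.

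\emph{Step (i).} The beamformer in \eqref{wj} places the whole unit-budget along $\mathbf{h}_{i'}^{(d)}$ and so attains the maximum of $E_{i'}^{\mathrm{rf}}$ over all precoders satisfying \eqref{P:c}. Since any competing $\mathbf{w}'$ obeys $E_{i'}^{\mathrm{rf}}(\mathbf{w}')\le E_{i'}^{\mathrm{rf}}(\mathrm{MRT})$ and hence $\min_i E_i^{\mathrm{rf}}(\mathbf{w}')\le E_{i'}^{\mathrm{rf}}(\mathrm{MRT})$, MRT solves \eqref{P} iff $E_i^{\mathrm{rf}}(\mathrm{MRT})\ge E_{i'}^{\mathrm{rf}}(\mathrm{MRT})$ for every $i\neq i'$. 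Using \eqref{Ei}, this recasts into
\[
R_i \triangleq \frac{\beta_i}{\beta_{i'}}\cdot \frac{\bigl|(\mathbf{h}_i^{(d)})^T\mathbf{h}_{i'}^{(d)*}\bigr|^2}{||\mathbf{h}_{i'}^{(d)}||^4}\;\ge\;1,\quad \forall\, i\neq i',
\]
so that $\mathbb{P}[\text{MRT optimal}]=\mathbb{P}[\min_{i\neq i'}R_i\ge 1]$.

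\emph{Step (ii).} Conditioning on $\mathbf{h}_{i'}^{(d)}$ and invoking the Rician decomposition $\mathbf{h}_i^{(d)}=\sqrt{\kappa/(1+\kappa)}\,\mathbf{1}+\tilde{\mathbf{g}}_i$ with $\tilde{\mathbf{g}}_i\sim\mathcal{CN}(\mathbf{0},\mathbf{I}/(1+\kappa))$ independent of $\mathbf{h}_{i'}^{(d)}$ gives
\[
\mathbb{E}\!\left[\bigl|(\mathbf{h}_i^{(d)})^T\mathbf{h}_{i'}^{(d)*}\bigr|^2 \,\big|\, \mathbf{h}_{i'}^{(d)}\right] = \tfrac{\kappa}{1+\kappa}\bigl|\mathbf{1}^T\mathbf{h}_{i'}^{(d)*}\bigr|^2+\tfrac{1}{1+\kappa}||\mathbf{h}_{i'}^{(d)}||^2.
\]
Dividing through by $||\mathbf{h}_{i'}^{(d)}||^4$ and taking the outer expectation splits $\mathbb{E}[R_i]$ into a diffuse contribution $\tfrac{1}{1+\kappa}\mathbb{E}[1/||\mathbf{h}_{i'}^{(d)}||^2]$ and a LOS contribution $\tfrac{\kappa}{1+\kappa}\mathbb{E}[|\mathbf{1}^T\mathbf{h}_{i'}^{(d)}|^2/||\mathbf{h}_{i'}^{(d)}||^4]$. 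I would evaluate the diffuse term by exploiting $2(1+\kappa)||\mathbf{h}_{i'}^{(d)}||^2\sim\chi^2(2M_t,2M_t\kappa)$ together with a delta-method refinement of $\mathbb{E}[1/X]$ to recover the factor $1/[M_t(1+\kappa/2)]$. The LOS term would be tackled by expanding $1/||\mathbf{h}_{i'}^{(d)}||^4$ to second order around $\mathbb{E}[||\mathbf{h}_{i'}^{(d)}||^2]=M_t$ and combining with the explicit first two moments of $|\mathbf{1}^T\mathbf{h}_{i'}^{(d)}|^2$; after collecting terms the bound collapses to $\tfrac{1}{4}(\kappa/(1+\kappa/\sqrt{2}))^2$. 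Factoring out $\beta_i/\beta_{i'}$ and minimizing over $i\neq i'$ then yields $\min_{i\neq i'}\mathbb{E}[R_i]\ge\Omega$.

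\emph{Step (iii) and main obstacle.} Each $R_i$ is a ratio of positive quadratic forms in jointly Gaussian variates, whose density is right-skewed and sufficiently concentrated that its mean dominates its median; hence $\mathbb{E}[R_i]\ge 1$ implies $\mathbb{P}[R_i\ge 1]\ge 1/2$. Because every $R_i$ shares the common denominator $||\mathbf{h}_{i'}^{(d)}||^4$, the event $\{\min_{i\neq i'}R_i<1\}$ is dominated by the sensor realizing the infimum in \eqref{Ec}, so the $1/2$ threshold carries over to the joint event, yielding $\mathbb{P}[\text{MRT optimal}]\ge 1/2$ whenever $\Omega>1$. The delicate point is the LOS contribution in Step (ii): since $|\mathbf{1}^T\mathbf{h}_{i'}^{(d)}|^2$ and $||\mathbf{h}_{i'}^{(d)}||^2$ are strongly coupled via the common Rician mean direction $\mathbf{1}$, any loose decoupling (e.g., Cauchy--Schwarz) destroys the characteristic factor $(1+\kappa/\sqrt{2})^{-1}$ appearing in $\Omega$; recovering that exact form demands a carefully controlled moment-matching expansion whose accuracy must be separately justified (e.g., against Monte Carlo simulation).
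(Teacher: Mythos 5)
Your Step (i) reduction (MRT solves the max--min problem iff $E_i^{\mathrm{rf}}(\mathrm{MRT})\ge E_{i'}^{\mathrm{rf}}(\mathrm{MRT})$ for all $i\ne i'$, since MRT maximizes $E_{i'}^{\mathrm{rf}}$) is sound and is exactly the logic underlying the paper's argument. The genuine gap is in Step (ii). The paper does \emph{not} derive the coefficients in \eqref{Ec} analytically: it computes $\mathbb{E}\big[E_i^{\mathrm{rf}}\big]$ in \eqref{exp}, observes that the distribution of $\big|(\mathbf{h}_i^{(d)})^T\mathbf{w}_1\big|^2$ is intractable (projected-normal direction, numerator and denominator correlated), and then obtains \eqref{app} by Monte Carlo simulation and curve fitting, validated in Fig.~\ref{Fig11}; $\Omega$ is simply the ratio $\inf_{i\ne i'}\mathbb{E}[E_i^{\mathrm{rf}}]/\mathbb{E}[E_{i'}^{\mathrm{rf}}]$ with $\mathbb{E}[E_{i'}^{\mathrm{rf}}]=P\beta_{i'}M_t$ from \eqref{Ei'}. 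Your plan to ``recover'' the factors $1/[M_t(1+\kappa/2)]$ and $\tfrac14\big(\kappa/(1+\kappa/\sqrt{2})\big)^2$ via delta-method/second-order expansions cannot succeed as stated: those expansions produce uncontrolled approximations, not lower bounds (so the claim ``$\min_{i\ne i'}\mathbb{E}[R_i]\ge\Omega$'' is not established), and they do not land on the fitted functional forms -- e.g., a first-order treatment of the diffuse term gives $\tfrac{1}{1+\kappa}\mathbb{E}\big[1/\|\mathbf{h}_{i'}^{(d)}\|^2\big]\approx\tfrac{1}{M_t(1+\kappa)}$, not $\tfrac{1}{M_t(1+\kappa/2)}$. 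Your own caveat that the expansion ``must be separately justified against Monte Carlo'' effectively concedes that the step collapses back to the paper's empirical fit, so the central quantitative content of the theorem is not proved by your route either; note also that $\mathbb{E}[R_i]\ne\mathbb{E}[E_i^{\mathrm{rf}}]/\mathbb{E}[E_{i'}^{\mathrm{rf}}]$ in general (the paper's $\Omega$ is a ratio of means, not a mean of the ratio), so even a correct evaluation of $\mathbb{E}[R_i]$ would not literally match $\Omega$.

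Step (iii) has a logical inversion. If $R_i$ is right-skewed, its mean exceeds its median, so $\mathbb{E}[R_i]\ge 1$ does \emph{not} imply $\mathbb{P}[R_i\ge 1]\ge 1/2$; that implication would require the median (not the mean) to exceed $1$, i.e., skewness works against you here, and your argument as written proves nothing. (The paper's own justification is a comparison of the skewness/medians of $E_i^{\mathrm{rf}}$ versus $E_{i'}^{\mathrm{rf}}$ at equal means, which is heuristic but at least oriented correctly.) Finally, the passage from the per-device events to the joint event $\{\min_{i\ne i'}R_i\ge 1\}$ is not ``dominated by the sensor realizing the infimum'': conditioned on $\mathbf{h}_{i'}^{(d)}$ the numerators $\big|(\mathbf{h}_i^{(d)})^T\mathbf{h}_{i'}^{(d)*}\big|^2$ are independent across $i$, so the joint probability is an expectation of a product of conditional probabilities and can fall well below the smallest marginal when $\mathrm{S}$ is large; sharing the denominator $\|\mathbf{h}_{i'}^{(d)}\|^4$ does not rescue the argument. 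In short, the skeleton mirrors the paper, but the two load-bearing steps (the analytic derivation of \eqref{Ec} and the mean-to-probability lift) are not valid as proposed.
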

\begin{proof}
	See Appendix~\ref{App_B}.  \phantom\qedhere
\end{proof}
\begin{remark}\label{re3}
	Notice that if we consider the large-LOS scenario, \eqref{Ec} simplifies to 
	\begin{align}
	\Omega
	\stackrel{\kappa\rightarrow\infty}{\approx}\frac{1}{2\beta_{i'}}\inf_{s_i\in\mathcal{S}\backslash s_{i'}}\{\beta_i\},
	\end{align}
	which basically tells us that when $s_{i'}$ undergoes a path-loss at least 3 dB greater than the experienced by the remaining EH nodes, the optimum energy beamformer is at least half of the time given by \eqref{wj} since $\Omega>1$. 
\end{remark}
	\subsection{CSI-free WET}\label{free}
	Several CSI-free powering schemes have been recently proposed and analyzed in \cite{Lopez.2019_CSI,Lopez.2020}, e.g., $\mathrm{OA}$, $\mathrm{AA-SS}$, $\mathrm{AA-IS}$ and $\mathrm{SA}$. In Section~\ref{introduction}, we highlighted their main characteristics and argued why we adopt the $\mathrm{SA}$ scheme as our CSI-free scheme in this paper. Summarizing, the reasons are three-fold: i) among the schemes taking advantages of the spatial resources, $\mathrm{SA}$ exhibits a homogeneous performance over the space, which is not sensitive to the different mean phases of the LOS channel component; ii) it is more suitable than the $\mathrm{AA-IS}$ scheme for powering devices far from the HAP; and iii) it allows a better coupling to the co-located information transmission processes since only one antenna may be transferring energy in the downlink while the remaining may be receiving uplink information.
	
	The following results characterize the statistics of the incident RF power under the $\mathrm{SA}$ scheme and provide energy outage expressions.
	\begin{proposition}
		The distribution of the incident RF power at $s_{i'}$ under $\mathrm{SA}$ is given by
		\begin{align}
		E_{i'}^\mathrm{rf}&=\frac{P\beta_{i'}}{M}||\mathbf{\tilde{h}}_{i'}^{(d)}||^2
		\sim \frac{P\beta_{i'}}{2M(1+\kappa)}\chi^2(2M,2M\kappa).\label{Eirf}
		\end{align}
	\end{proposition}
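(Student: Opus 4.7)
The plan is to derive the distribution of $E_{i'}^{\mathrm{rf}}$ directly from the protocol description of the $\mathrm{SA}$ scheme and then invoke the standard characterization of $\|\cdot\|^2$ for a Rician vector.

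First, I would translate the $\mathrm{SA}$ protocol into an expression for the time-averaged incident RF power over a coherence block. Since the PB transmits with full power $P$ through one antenna at a time and cycles through all $M$ antennas during the coherence block, antenna $m$ is active for a fraction $1/M$ of the time. Denoting the channel coefficient from antenna $m$ to $s_{i'}$ by $\tilde{h}_{i'm}^{(d)}$, the instantaneous incident power while antenna $m$ is active is $P\beta_{i'}|\tilde{h}_{i'm}^{(d)}|^2$; averaging yields
\begin{align}
E_{i'}^{\mathrm{rf}} \;=\; \frac{1}{M}\sum_{m=1}^{M} P\beta_{i'}\,|\tilde{h}_{i'm}^{(d)}|^2 \;=\; \frac{P\beta_{i'}}{M}\,\|\tilde{\mathbf{h}}_{i'}^{(d)}\|^2,
\end{align}
which is the first equality of the claim. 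Here the sum extends over all $M$ antennas (rather than only $M_t$) precisely because, under $\mathrm{SA}$, every antenna is engaged in WET at some point of the coherence block, while all remaining antennas stay silent and can be used for uplink reception.

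Second, I would derive the distribution of $\|\tilde{\mathbf{h}}_{i'}^{(d)}\|^2$ from the Rician channel model. By assumption, $\tilde{\mathbf{h}}_{i'}^{(d)} \sim \mathcal{CN}\bigl(\sqrt{\kappa/(1+\kappa)}\,\bm{1}_{M\times 1},\,\tfrac{1}{1+\kappa}\mathbf{I}_{M\times M}\bigr)$. Using the standard fact that, for a circularly-symmetric complex Gaussian vector $\mathbf{c}\sim\mathcal{CN}(\bm{\mu},\sigma^{2}\mathbf{I}_{M})$, the quantity $(2/\sigma^{2})\|\mathbf{c}\|^2$ is non-central chi-squared with $2M$ degrees of freedom and non-centrality parameter $(2/\sigma^{2})\|\bm{\mu}\|^{2}$, I plug in $\sigma^{2}=1/(1+\kappa)$ and $\|\bm{\mu}\|^{2}=M\kappa/(1+\kappa)$ to obtain
\begin{align}
\|\tilde{\mathbf{h}}_{i'}^{(d)}\|^2 \;\sim\; \frac{1}{2(1+\kappa)}\,\chi^{2}\!\bigl(2M,\,2M\kappa\bigr).
\end{align}
Combining this with the first step and absorbing the deterministic factor $P\beta_{i'}/M$ immediately yields the stated result.

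The argument is essentially bookkeeping, so there is no genuine obstacle; the only subtlety worth being explicit about is why $M$ (and not $M_t$) appears, which I would justify by noting that the $\mathrm{SA}$ transmission is temporally multiplexed across all antennas while the information reception exploits the antennas that are idle at each instant. A brief pointer to \cite[Eq.~(45)]{Lopez.2019_CSI}, which was already invoked for $E_{i'}^{\mathrm{rf}}$ under MRT in \eqref{Ei'}, suffices to settle the chi-squared identification without redoing the derivation.
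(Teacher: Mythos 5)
Your proposal is correct and follows essentially the same route as the paper: the time-multiplexed operation of $\mathrm{SA}$ gives the factor $1/M$ and the $M$-dimensional vector $\mathbf{\tilde{h}}_{i'}^{(d)}$, after which the scaled non-central chi-squared law $\frac{1}{2(1+\kappa)}\chi^2(2M,2M\kappa)$ for $\|\mathbf{\tilde{h}}_{i'}^{(d)}\|^2$ is identified exactly as in \eqref{Ei'}. The only difference is cosmetic: you derive the chi-squared identification explicitly from the Gaussian norm fact, whereas the paper simply invokes the connection to \eqref{Ei'} (i.e., \cite[Eq.~(45)]{Lopez.2019_CSI}), and your non-centrality and scaling computations check out.
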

\begin{proof}
	In our setup, the adoption of the SA scheme implies that each transmit antenna is active during $T_c/M$ seconds, while the remaining $M\!-\!1$ antennas function as receive antennas, i.e., $M_t\!=\!1,\ M_r\!=\!M\!-\!1$. Then, we can directly state \eqref{Eirf} by exploring the connection to \eqref{Ei'} and defining $\mathbf{\tilde{h}}_{i'}^{(d)}\in\mathbb{C}^{M\times 1}$ since all antennas  transmit during a coherence block but not simultaneously.
\end{proof}
	\begin{corollary}\label{cor1}
		Consequently, by taking $\xi_\mathrm{csi}^{(d)}\leftarrow 0$, $P\leftarrow P/M$ and $M_t\leftarrow M$, we conclude that \eqref{per} and \eqref{cfO} hold, but herein as exact and approximate energy outage expressions under periodic and Poisson traffic,  respectively, instead of lower bounds.
	\end{corollary}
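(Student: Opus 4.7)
The plan is to exploit the structural similarity between the SA incident-power distribution just established and the MRT incident power \eqref{Ei'} that drives the CSI-based derivations. Expression \eqref{Eirf} matches \eqref{Ei'} exactly under the substitutions $P\leftarrow P/M$ and $M_t\leftarrow M$, because $\tilde{\mathbf{h}}_{i'}^{(d)}\in\mathbb{C}^{M\times 1}$ yields the $\chi^2(2M,2M\kappa)$ factor of \eqref{Eirf}. Moreover, the SA scheme requires no downlink pilot transmission or associated feedback, so $\xi_\mathrm{csi}^{(d)}=0$ in the energy-expenditure accounting. Under these substitutions, each step that carried \eqref{Ei'} into \eqref{per} and into \eqref{cfO} — namely the chi-square summation over $\lceil t_s/T_c\rceil$ independent coherence blocks in the periodic case, the random-sum construction over the discrete inter-arrival RV $V$ in the Poisson case, and the application of the non-central chi-square CDF \eqref{cdf} — transfers verbatim.

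The second step is to argue that the resulting expressions cease to be mere lower bounds on $\sup_i\{\mathcal{O}_i^{(d)}\}$. The slack in \eqref{O1} had two sources: the beamformer returned by $\mathbf{P}$ need not coincide with an MRT focused on $s_{i'}$, and the supremum over $\mathcal{S}$ could in principle be attained at a device other than the farthest one. Both sources vanish under SA. There is no beamforming degree of freedom, and the incident power at every $s_i$ is exactly $\frac{P\beta_i}{M}\|\tilde{\mathbf{h}}_i^{(d)}\|^2$ with the $\tilde{\mathbf{h}}_i^{(d)}$ sharing a common Rician distribution; hence $\mathcal{O}_i^{(d)}$ depends on $i$ only through the deterministic path gain $\beta_i$ and is non-increasing in it. Since $\beta_{i'}=\inf_i \beta_i$, the supremum is attained at $s_{i'}$ with equality, collapsing \eqref{O1} to an identity.

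Finally, the exact-versus-approximate distinction is inherited from the original derivations. For periodic traffic, $\lceil t_s/T_c\rceil$ is deterministic and \eqref{per} already follows from an exact Marcum $Q$-function evaluation, so the substituted expression remains exact. For Poisson traffic, \eqref{cfO} was itself stated as an approximation in Theorem~\ref{the2}, due to truncating the infinite sum over $V$ at $v_{\max}$; this truncation is independent of whether WET is CSI-based or CSI-free, so the substituted expression inherits the same approximate status. The main subtlety is the tightness argument in the second step — verifying that SA fully decouples node performance so that $s_{i'}$ is the exact worst node — but this reduces to the monotonicity observation above once the common distributional form of $\tilde{\mathbf{h}}_i^{(d)}$ is noted.
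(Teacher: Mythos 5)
Your proposal is correct and follows essentially the same route as the paper: the paper's (implicit) justification is precisely the distributional match of \eqref{Eirf} with \eqref{Ei'} under the substitutions $P\leftarrow P/M$, $M_t\leftarrow M$, $\xi_\mathrm{csi}^{(d)}\leftarrow 0$, after which the derivations leading to \eqref{per} and \eqref{cfO} carry over unchanged. Your explicit argument for why the bound collapses to an (exact/approximate) equality --- no beamforming degree of freedom under $\mathrm{SA}$ and monotonicity of $\mathcal{O}_i^{(d)}$ in $\beta_i$, so the supremum is attained exactly at $s_{i'}$ --- is left implicit in the paper (via the homogeneous-performance property of $\mathrm{SA}$) and is a sound completion of the same reasoning.
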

	\begin{remark}\label{re4}
		Under the CSI-based scheme, the average harvested energy can be up to $M_t$ times greater than under $\mathrm{SA}$ scheme, for which $\mathbb{E}[E_{i'}^\mathrm{rf}]=P\beta_{i'}$; however, the diversity gain of $\mathrm{SA}$ is $M/M_t^{\mathrm{mrt}}>1$ greater since all antennas contribute. Additionally, note that differently from $\mathrm{SA}$,  the WET performance under the CSI-based scheme is affected in practice by the CSI inaccuracy.
	\end{remark}
\begin{remark}\label{re4p5}
	Assuming an homogeneous deployment around the HAP, we can assure that as the number of EH devices increases, the average harvested energy under the CSI-based EB decreases approaching that attained under  $\mathrm{SA}$. This is because the maximum gap between these schemes is $M_t$ (see Remark~\ref{re4}) and happens when the CSI-based scheme matches the MRT beamforming. However, as the number of devices increases, the greater the chances of $\Omega$ in \eqref{Ec} be smaller since $\inf_{s_i\in\mathcal{S}\backslash s_{i'}}\{\beta_i\}$ is expected to decrease,  thus MRT is less often the optimum CSI-based EB.
\end{remark}
	\begin{table*}[!t]
		\centering
		\caption{Main system performance characteristics under the considered CSI-based and CSI-free WET schemes}
		\begin{tabular}{lccccc}
			\toprule
			\textbf{Schemes} & \textbf{No. Tx. Antennas} &  \textbf{Average EH Gain} & \textbf{EH Diversity} & \textbf{No. Rx. Antennas} & \textbf{Energy Requirements} \\
			\midrule
			CSI-based  & $M_t$ & $\le M_t$ & $M_t$ & $M-M_t$ & Moderate$-$High\\ 
			CSI-free (SA)  & $1$  & $1$ &$M$ & $M-1$ & Low$-$Moderate\\		
			\bottomrule
		\end{tabular}\label{table1}
	\end{table*}
	
	A summary on the system performance characteristics under the CSI-based and CSI-free WET schemes is presented in Table~\ref{table1}. Notice that the average EH gain is counted as $\mathbb{E}[E^\mathrm{rf}_{i'}]/(P\beta_{i'})$, while the energy requirement field accounts for all energy consumption sources including the uplink CSI-acquisition procedure which is required for both analyzed schemes.
	\section{Wireless Information Transmission}\label{WIT}
	As commented in Section~\ref{system}, at some points, the EH devices require sending short data messages of $k$ bits/Hz over blocks of $t$ seconds to the HAP. The HAP uses $M_r$ antennas to decode the arriving messages and resolve possible simultaneous transmissions.   We assume the uplink CSI, which is needed to implement ZF or MMSE linear decoding schemes adopted here, is perfectly acquired  at the HAP (jointly with devices' detection in case of Poisson traffic) when receiving the pilot signals sent by the active EH devices, nonetheless, some illustrative results on the degenerative effect of imperfect uplink CSI acquisition are discussed in Section~\ref{results}.	
	Additionally, we consider orthogonal pilot signals.
	Note that non-orthogonal allocations, e.g., as in \cite{Shao.2020}, would require very sporadic activation profiles and/or a very large antenna array at the HAP for efficient detection and CSI acquisition, and it is left for future work together with the analysis under non-coherent decoding schemes and a rigorous analytical treatment of imperfect CSI acquisition.
	
	We consider an uninterrupted downlink WET, while now and then a subset of the devices interrupt their harvesting process  to send uplink data. The self-interfering powering signals, traveling through the channels between the $M_t$ transmit antennas and $M_r$ receive antennas when using either the CSI-based or CSI-free WET scheme, are assumed to be perfectly canceled  via Successive Interference Cancellation (SIC) techniques; while we discuss the impact of imperfect SIC through some numerical results in Section~\ref{generalP}. Note that SIC techniques may include analog and digital processes, and can even benefit from the fact that the powering signals may be chosen deterministically under the CSI-free WET scheme.
	
	Finally, under SA we assume that the transmit slots are scheduled such that no antenna switching occurs during an actual uplink transmission, which would complicate the information decoding procedures. Next, we analyze the WIT performance under the considered traffic profiles.
	\subsection{WIT under periodic traffic}\label{periodic}
	As commented in Subsection~\ref{txmodel}, the maximum number of concurrent transmissions is deterministically $\lceil\mathrm{S}/\lfloor t_s/t\rfloor\rceil$, thus, the same number of orthogonal pilot signals (and pilot symbols per signal) for uplink CSI estimation is required. Then, under periodic traffic, $\xi_\mathrm{csi}^{(u)}$ can be broken approximately into
	\begin{align}
	\xi_\mathrm{csi}^{(u)}\approx\lceil\mathrm{S}/\lfloor t_s/t\rfloor\rceil \tilde{\xi}_0,
	\end{align}
	where $\tilde{\xi}_0$ is the per-symbol pilot energy. Next, we investigate the outage performance of the data transmission phase.
	\subsubsection{Signal model}
	At the HAP, the data signal received after each transmission is given by
	\begin{align}
	\mathbf{y}=\mathbf{H}^{(u)}\mathbf{P}_{\!\beta}^{1/2}\mathbf{x}+\mathbf{w},\label{y}
	\end{align}
	where the $j-$th column of $\mathbf{H}^{(u)}$ is $\mathbf{h}_j^{(u)}$ and consequently such matrix has dimension $M_r\times \mathrm{S}$, $\mathbf{P}_{\!\beta}=\mathrm{diag}\big(\{p_i\beta_i\}\big)$, $\mathbf{x}\in\mathbb{C}^{\mathrm{S\times 1}}$ is the normalized vector of the normal signals transmitted by the $\mathrm{S}$ devices, and $\mathbf{w}\sim\mathcal{CN}(\mathbf{0},\sigma^2\mathbf{I}_{M_r\times M_r})$ is the Additive White Gaussian Noise (AWGN) vector at the $M_r$ antennas. 
	If $s_i$ is not active in a given transmission slot of duration $t$, we consider that the respective entries in $\mathbf{H}^{(u)}$, $\mathbf{P}_{\!\beta}$ and $\mathbf{x}$ are zero. Consequently the number of non-zero columns of $\mathbf{H}^{(u)}$ is at most $\lceil\mathrm{S}/\lfloor t_s/t\rfloor\rceil$, which matches also the maximum number of non-zero rows and columns of $\mathbf{P}_{\!\beta}$, and the number of non-zero elements of $\mathbf{x}$. 
	Finally, the equalizer $\mathbf{Q}\in\mathbb{C}^{\mathrm{S}\times M_r}$ at the receiver decouples the transmitted data streams such that its output is given by
	\begin{align}
	\mathbf{y}_{\mathrm{out}}=\mathbf{Q}\mathbf{y}=\mathbf{Q}\mathbf{H}^{(u)}\mathbf{P}_{\!\beta}^{1/2}\mathbf{x}+\mathbf{Q}\mathbf{w}.\label{yout}
	\end{align}
	\subsubsection{ZF}
	The ZF equalizer is 
	\begin{align}
	\mathbf{Q}^\mathrm{zf}&=\Big(\big(\mathbf{H}^{(u)}\mathbf{P}_{\!\beta}^{1/2}\big)^H\mathbf{H}^{(u)}\mathbf{P}_{\!\beta}^{1/2}\Big)^{-1}\big(\mathbf{H}^{(u)}\mathbf{P}_{\!\beta}^{1/2}\big)^H\nonumber\\
	&=\mathbf{P}_{\!\beta}^{-1/2}\Big(\mathbf{H}^{(u)H}\mathbf{H}^{(u)}\Big)^{-1}\mathbf{H}^{(u)H},
	\end{align}
	and by substituting it into \eqref{yout} yields
	\begin{align}
	\mathbf{y}_{\mathrm{out}}^\mathrm{zf}=\mathbf{x}+\mathbf{P}_{\!\beta}^{-1/2}\Big(\mathbf{H}^{(u)H}\mathbf{H}^{(u)}\Big)^{-1}\mathbf{H}^{(u)H}\mathbf{w}.
	\end{align}
	Then, the instantaneous Signal to Interference-plus-Noise Ratio (SINR) of the output stream corresponding to the one transmitted by $s_i$ is given by 
	\begin{align}
	\gamma_i^\mathrm{zf}&=\frac{1}{\Big[\Big(\big(\mathbf{H}^{(u)}\mathbf{P}_{\!\beta}^{1/2}\big)^H\mathbf{H}^{(u)}\mathbf{P}_{\!\beta}^{1/2}\Big)^{-1}\Big]_{i,i}\!\sigma^2}\nonumber\\
	&=\!\frac{1}{\Big[\mathbf{P}_{\!\beta}^{-1/2}\mathbf{Z}\mathbf{P}_{\!\beta}^{-1/2}\Big]_{i,i}\!\sigma^2}\!=\!\frac{p_i\beta_i}{\sigma^2}Z^{\mathrm{zf}},\label{g1}
	\end{align}
	where $Z^{\mathrm{zf}}=1/Z_{i,i}$ with $\mathbf{Z}=\big(\mathbf{H}^{(u)H}\mathbf{H}^{(u)}\big)^{-1}$. 
	\begin{remark}\label{re5}
	Notice that for Rayleigh fading, i.e., $\kappa=0$, $\mathbf{Z}$ has the central inverse Wishart distribution, which for the case of $M_r$ greater than the number of data streams $N$, yields to $Z^\mathrm{zf}\sim 2\chi^2(2(M_r-N+1))$. Meanwhile, the analysis under Rician fading is encumbered by the noncentrality of the Wishart distribution of $\mathbf{Z}^{-1}$. The usual approach in such case lies in approximating the noncentral Wishart distribution by the virtual central Wishart distribution as summarized in \cite{Siriteanu.2012}. In any case, the analysis is cumbersome, specially for the general scenario where $M_r\ge N$ does not need to necessarily hold, thus, we take no further steps to characterize the distribution of $Z^{\mathrm{zf}}$. 
	\end{remark}
	\subsubsection{MMSE} 
	The MMSE equalizer is 
	\begin{align}
	\mathbf{Q}^\mathrm{mmse}&\!\!=\!\Big(\!\mathbf{P}_{\!\beta}^{1/2}\mathbf{H}^{(u)H}\mathbf{H}^{(u)}\mathbf{P}_{\!\beta}^{1/2}\!+\!\sigma^2\mathbf{I}\Big)^{\!-1}\!\!\big(\mathbf{H}^{(u)}\mathbf{P}_{\!\beta}^{1/2}\big)^{\!H}\!\!,
	\end{align}
	while the corresponding component for decoding the $i-$th data stream is given by
	\begin{align}
	\mathbf{q}^\mathrm{mmse}_i=\Big(\sigma^2\mathbf{I}+\sum_{j\ne i}^{\mathrm{S} }\beta_j p_j\mathbf{h}_j^{\!(u)}\mathbf{h}_j^{\!(u)H}\Big)^{-1}\!\mathbf{h}_i^{\!(u)}.
	\end{align}
	Then, the corresponding instantaneous SINR is given by
	\begin{align}
	\gamma_i^\mathrm{mmse}&=\frac{\beta_ip_i}{\sigma^2}Z^\mathrm{mmse},\label{g2}
	\end{align}
	where $Z^\mathrm{mmse}=\mathbf{h}_i^{\!(u)H}\Big(\mathbf{I}+\sum_{j\ne i}^{\mathrm{S} }\frac{\beta_j p_j}{\sigma^2}\mathbf{h}_j^{\!(u)}\mathbf{h}_j^{\!(u)H}\Big)^{-1}\!\mathbf{h}_i^{\!(u)}$. 
	\begin{remark}\label{re6}
		Even in the simplest scenario with Rayleigh fading, equal per-user SNR, and $M_r\!\ge\! N$, the distribution of $Z^\mathrm{mmse}$ is cumbersome as corroborated in \cite{Gao.1998,Lim.2019}. This, and the fact that for a more general scenario there is no closed-form expression for the PDF and CDF of $Z^\mathrm{mmse}$.
	\end{remark}
	\subsubsection{Information outage performance}
	For the sake of fairness, we assume that those devices with the most similar path losses are scheduled for concurrent transmission. This is possible under periodic traffic, which is deterministic by nature. Let us sort the devices according to their path loss such that $s_1$ is the device with the smallest attenuation, while $s_\mathrm{S}=s_{i'}$ is the device under the greatest path loss. Now, we evaluate the information outage performance at $s_{i'}$ in order to get a bound on the performance of any node in the network\footnote{Such bound is expected to hold under the assumption of equal devices' transmit power, or a power allocation such that a greater attenuation implies a smaller transmit power. While the latter seems odd at first sight since the farthest node is usually allowed to transmit with greater power in traditional cellular networks, it is not the case in WPCNs where the farthest device harvests also less energy.}. Thus, we have that $N=\lceil\mathrm{S}/\lfloor t_s/t\rfloor\rceil$, while \eqref{y} and subsequent derivations can be compacted by eliminating the zero-rows/columns of $\mathbf{H}^{(u)},\ \mathbf{P}_\beta$, e.g. $\mathbf{H}^{(u)}\in \mathbb{C}^{M_r\times N}$, $\mathbf{P}_\beta=\mathrm{diag}\big(\{p_i\beta_i\}\big),\ i\!=\!\mathrm{S}\!-\!N\!+\!1,\cdots,\mathrm{S}$, and reducing the dimension of vector $\mathbf{x}$, i.e., $\mathbf{x}\in \mathbb{C}^{N\times 1}$.  Then,
	\begin{align}
	\sup_i\{\mathcal{O}_i^{(u)}\}&= \mathcal{O}_{i'}^{(u)}=\mathbb{P}\big[\log_2(1+\gamma_{i'})<k/t\big]\nonumber\\
	&=\mathbb{P}\big[\gamma_{i'}<2^{k/t}-1\big]\nonumber\\
	&=F_Z\bigg(\frac{(2^{k/t}-1)\sigma^2}{\beta_{i'}p_{i'}}\bigg),\label{gam}
	\end{align}
	which comes from using \eqref{g1} and \eqref{g2} such that $Z\in\{Z^\mathrm{zf},Z^\mathrm{mmse}\}$. Since the distribution of $Z$ is intractable (see Remarks~\ref{re5} and \ref{re6}), we evaluate \eqref{gam} by drawing samples of $Z$ via Monte Carlo. Notice that this constitutes a semi-analytical computation of the information outage since the traffic is not simulated, thus, it is much easier and more efficient to compute than relying on a pure Monte Carlo approach \cite{Rubinstein.2016}.
	\subsection{WIT under Poisson traffic}\label{Poisson}
	Concurrent transmissions happen randomly under Poisson traffic. Therefore, there is no way of completely avoiding  the pilot collisions unless all devices are allocated orthogonal pilot sequence. However, this can be extremely energy-costly for large $\mathrm{S}$ since $\xi_\mathrm{csi}^{(u)}=\mathrm{S}\tilde{\xi}_0$, where $\tilde{\xi}_0$ was defined in the previous subsection. To overcome this, we herein allow collisions to occur with a probability not greater than $\varepsilon$, which is a system parameter to be efficiently designed.
	\subsubsection{Collision probability}
	The probability that a given device $s_i$ is active at a certain time slot is given by
	\begin{align}
	\mathbb{P}\big[s_i\in \tilde{\mathcal{S}}\big]&=\frac{1}{\mathbb{E}_V[vT_c/t]}=\frac{t/T_c}{\mathbb{E}_V[v]}
	=\frac{t}{T_c}\big(1-e^{-\lambda}\big), \label{ps}
	\end{align}
	where $\tilde{\mathcal{S}}\subseteq\mathcal{S}$ denotes the set of active devices in such a time slot, and the last equality comes from computing $\mathbb{E}_V[v]$, which is given in \eqref{Ev} in Appendix~\ref{App_A}.
	\begin{remark}\label{re7}
		Notice that the subset $\tilde{\mathcal{S}}\subseteq\mathcal{S}$ of active devices is random under Poisson traffic, and also its cardinality $N=|\tilde{\mathcal{S}}|$, which is a Binomial RV with parameters $\mathrm{S}$ and $\frac{t}{T_c}\big(1-e^{-\lambda}\big)$.
	\end{remark}
	
    The collision probability is characterized in the following result.
	\begin{theorem}\label{the4}
		Assuming certain device is active, its associated collision probability is given by
		\begin{align}
		\mathcal{O}_\mathrm{col}&=1-\frac{\frac{\mathrm{S}t}{LT_c}(1-e^{-\lambda})\Big(1-\frac{t}{LT_c}(1-e^{-\lambda})\Big)^{\mathrm{S}-1}}{1-\Big(1-\frac{t}{LT_c}(1-e^{-\lambda})\Big)^{\mathrm{S}}},\label{ocol}
		\end{align}
		where $L$ is the number of orthogonal pilot sequences/symbols and $L/\mathrm{S}$ is the pilot reuse factor.
	\end{theorem}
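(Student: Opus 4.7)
The plan is to reduce the collision analysis to a conditional Binomial calculation. First I would recall from \eqref{ps} that each device is, independently, active in a given slot of duration $t$ with probability $p_a = \frac{t}{T_c}(1-e^{-\lambda})$. Since there are $L$ orthogonal pilot sequences and the pilot reuse factor is $L/\mathrm{S}$, I would model each active device as selecting one of the $L$ pilots uniformly at random, so that the probability that any given device is simultaneously active in the slot and attached to a specific pilot $\ell$ is
\begin{align}
q \triangleq \frac{p_a}{L} = \frac{t}{LT_c}\big(1-e^{-\lambda}\big).
\end{align}
By independence across devices, the number $X_\ell$ of devices that are active on pilot $\ell$ in the slot is $\mathrm{Binomial}(\mathrm{S},q)$.

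Next I would interpret the conditioning event ``a certain device is active'' as the observable event that some device is active on pilot $\ell$, i.e., $\{X_\ell\ge 1\}$; this is the event under which the HAP actually attempts decoding on that pilot. A collision then corresponds to $\{X_\ell\ge 2\}$, so that
\begin{align}
\mathcal{O}_\mathrm{col} &= \mathbb{P}\!\left[X_\ell \ge 2\,\middle|\,X_\ell \ge 1\right] = 1 - \frac{\mathbb{P}[X_\ell=1]}{\mathbb{P}[X_\ell\ge 1]}.
\end{align}
Plugging in the Binomial PMF $\mathbb{P}[X_\ell=1]=\mathrm{S}q(1-q)^{\mathrm{S}-1}$ and the Binomial CCDF $\mathbb{P}[X_\ell\ge 1]=1-(1-q)^{\mathrm{S}}$, and reinserting $q=\frac{t}{LT_c}(1-e^{-\lambda})$, delivers \eqref{ocol} after trivial algebra. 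By symmetry over the $L$ pilots, the choice of $\ell$ is immaterial, so the resulting expression captures the ``per-active-device'' collision probability at the HAP.

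The main obstacle is conceptual rather than computational: pinning down the correct conditioning event. A literal reading of ``assume a tagged device $s_i$ is active on pilot $\ell$'' would instead give $1-(1-q)^{\mathrm{S}-1}$, because under that conditioning one occupant of $\ell$ is fixed and only the remaining $\mathrm{S}-1$ devices contribute to a collision. To match \eqref{ocol}, I must argue that the operationally relevant quantity is the fraction of active pilot slots that suffer a collision, equivalently $\mathbb{P}[X_\ell\ge 2\mid X_\ell\ge 1]$, which coincides with the expected fraction of active devices that collide. I would justify this by noting that the HAP only observes ``activity on pilot $\ell$'' (not the identity of which of the $\mathrm{S}$ devices triggered it), so the relevant prior is $\{X_\ell\ge 1\}$ rather than a device-specific activation. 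Once this is settled, the rest of the derivation is a one-line Binomial-ratio calculation.
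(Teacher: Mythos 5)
Your proposal is correct and takes essentially the same route as the paper: Appendix~C likewise models the number $N'$ of devices using a given pilot as a Binomial RV with parameters $\mathrm{S}$ and $\frac{t}{LT_c}(1-e^{-\lambda})$ and obtains \eqref{ocol} as $\mathcal{O}_\mathrm{col}=1-\mathbb{P}[N'=1\mid N'>0]$. The conditioning subtlety you flag (tagged-device conditioning would give $1-(1-q)^{\mathrm{S}-1}$) is real, and your resolution---conditioning on the pilot being active, i.e., $N'>0$---is exactly what the paper's proof does despite its ``assuming such a device is already active'' phrasing.
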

\begin{proof}
	See Appendix~\ref{App_C}. \phantom\qedhere
\end{proof}
	Then, we must choose $L$ such that $\mathcal{O}_\mathrm{col}\le\varepsilon$. However, notice that if such $L$ is greater than $\mathrm{S}$, it is preferable to deterministically assign  one unique pilot sequence to each user, thus,  avoiding  the collisions completely. Therefore, the optimum $L$ given $\varepsilon$ is given by
	\begin{align}
	L^*=\left\{\begin{array}{lll}
	L_0, &\text{if}\ L_0<\mathrm{S}&\rightarrow \mathcal{O}_\mathrm{col}\ \text{given in \eqref{ocol}} \\
	\mathrm{S}, &\text{otherwise}&\rightarrow \mathcal{O}_\mathrm{col}=0
	\end{array}\right.,\label{Leq}
	\end{align}
	where 
	\begin{align}
	L_0 = \inf_{L\in\mathbb{Z}^+,\mathcal{O}_\mathrm{col}\le \varepsilon} L.\label{problem}
	\end{align}
	In Appendix~\ref{App_D}, we illustrate a simple procedure for solving \eqref{problem}. Finally,
	\begin{align}
	\xi_\mathrm{csi}^{(u)}=L^*\tilde{\xi}_0.
	\end{align}
	Next, we investigate the outage performance of the data transmission phase.
	\subsubsection{Information outage performance}
	Herein, we need to consider the pilot collision events and the outages due to decoding errors.
	Since $\mathcal{O}_\mathrm{col}$ takes into account the events related to the collided $s_{i'}$'s transmissions, we are now interested on the event where $s_{i'}$ operates without collision while the remaining IoT sensors in $\tilde{S}$ may or may not be colliding. 
	Consequently, we now have that
	\begin{align}
	\sup_i&\{\mathcal{O}_i^{(u)}\}\nonumber\\
	&\stackrel{(a)}{=} \mathcal{O}_\mathrm{col}\!+\!\big(1\!-\!\mathcal{O}_\mathrm{col}\big)\mathbb{E}_{\tilde{\mathcal{S}}|N\ge 1}\bigg[F_{Z|\tilde{\mathcal{S}},N\ge 1}\Big(\frac{(2^{k/t}\!-\!1)\sigma^2}{\beta_{i'}p_{i'}}\Big)\bigg]\nonumber\\
	&\stackrel{(b)}{\approx} \varepsilon\!+\!(1-\varepsilon)\mathbb{E}_{\tilde{\mathcal{S}}|N\ge 1}\bigg[F_{Z|\tilde{\mathcal{S}},N\ge 1}\Big(\frac{(2^{k/t}\!-1)\sigma^2}{\beta_{i'}p_{i'}}\Big)\bigg],\label{gam2}
	\end{align}
	where the last line comes from using $\mathcal{O}_{\mathrm{col}}\approx\varepsilon$, which holds as long as the system is properly designed. The latter term in both $(a)$ and $(b)$ can be easily  evaluated by 
	\begin{enumerate}
		\item generating a sample $N$ conditioned on $N\ge 1$ (see Remark~\ref{re7});
		\item drawing $N-1$ elements from $\mathcal{S}\backslash s_{i'}$ to conform the set of interfering devices;
		\item evaluating \eqref{gam} for such configuration;
		\item averaging \eqref{gam} over many possible realizations of $N$.
	\end{enumerate}
	Note that setting the target collision  probability $\varepsilon$ for optimum system performance is a challenging task since the last term of \eqref{gam2} intricately depends  on $\varepsilon$, hence, numerical analysis seems unavoidable and is carried out in the next section.
	\begin{remark}\label{re8}
	Summarizing, the key differences between the WIT analysis under periodic and Poisson traffic are: i) the number of concurrent transmissions is either deterministic (under periodic traffic) or random (under Poisson traffic); ii) the required number of pilot sequences under periodic traffic is fixed and matches the number of concurrent transmissions, while under Poisson traffic more pilot sequences are required to mitigate random pilot collisions; iii) previous issue imposes an important design challenge in terms of choosing the appropriate number of pilot sequences in Poisson traffic scenarios since the more pilot sequences are used, the smaller the collision probability but the greater the training power consumption and the chances of energy outage.
	\end{remark}	
	\section{Numerical Results}\label{results}
	In this section, numerical examples are provided to corroborate our study and evaluate the suitability of the CSI-free or CSI-based WET schemes. We assume $\lambda=T_c/t_s$ for a fair comparison between the periodic and Poisson traffic profiles. Also, the HAP has a maximum transmit power $P=10$ W, and its associated devices are distributed around in a $12$ m-radius circular area as shown in Fig.~\ref{Fig2}. Specifically, we consider the EH devices are at $2,4,6,8,10$ and $12$ m from the HAP, while the number of devices in each sub-circumference is proportional to its length, thus, devices are approximately uniformly distributed in the coverage area. 
	\begin{remark}\label{re9}
		According to the adopted deployment scenario and the results related to Theorem~\ref{the3} and Remark~\ref{re3}, around half of the EH devices (those at 10m and 12m from the HAP) are expected to fully determine the optimum EB most of the time. Consequently the HAP requires coordinating approximately  $\mathrm{S}/2$ uplink transmissions for acquiring an effective downlink CSI of the network.
	\end{remark}

    Based on Remark~\ref{re9} and considering a certain EH circuitry being adopted, $\eta$ must be chosen such that it matches the conversion efficiency at which devices at $10-12$ m from the HAP operate on average. This can be computed based on the average RF receive power, and consequently may differ under the CSI-based and CSI-free schemes when powering a relatively small set of devices. However, herein we adopt the same EH conversion efficiency of $\eta=0.25$ in both cases for simplicity.
   	\begin{figure}[t!]
    	\centering  
    	\includegraphics[width=0.35\columnwidth,trim= 5 0 20 0]{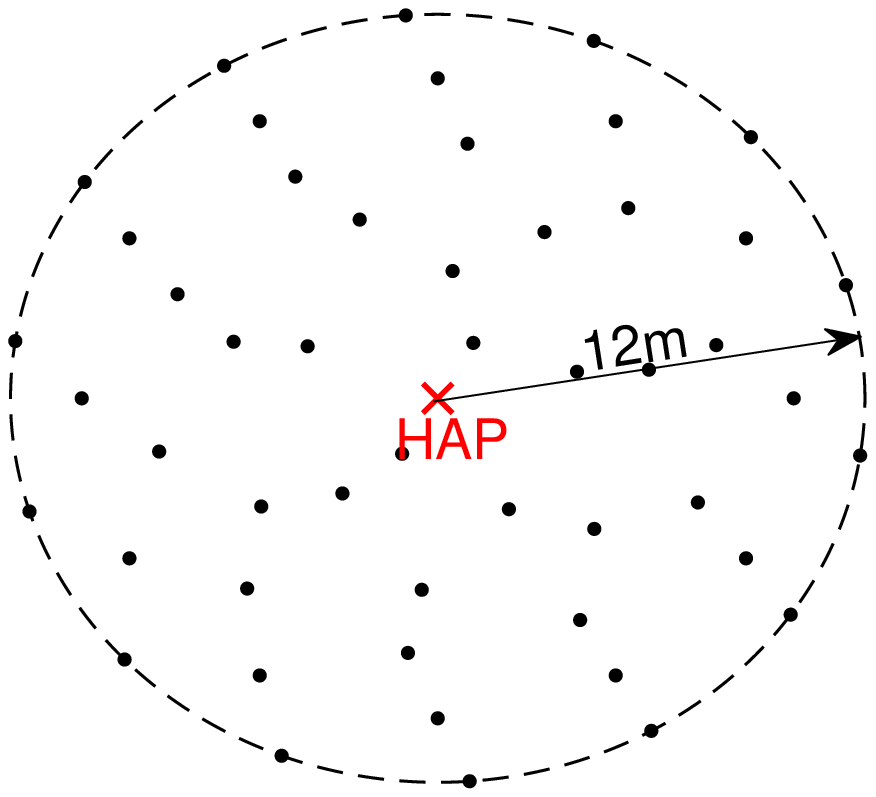}\qquad\qquad\ \  \includegraphics[width=0.35\columnwidth,trim= 5 0 20 0]{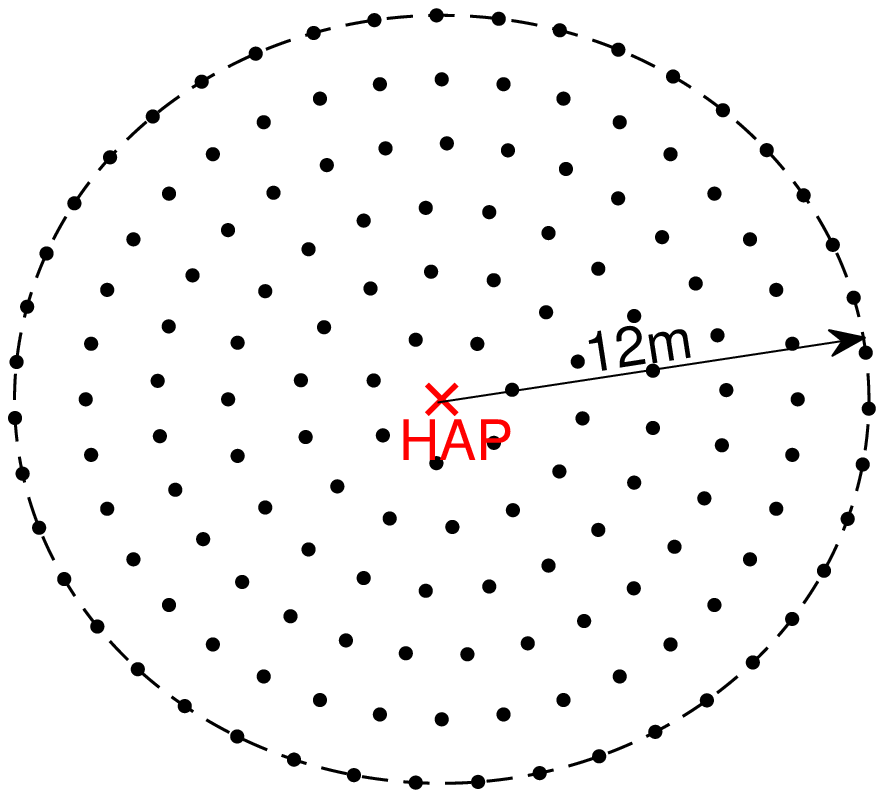}		
    	\caption{Example deployments: $(a)$ $\mathrm{S}=50$ (left) and $(b)$ $\mathrm{S}=150$ (right).}		
    	\label{Fig2}
    \end{figure}

	Channels remain static for $T_c=400$ ms, and are subject to a log-distance path-loss model with exponent $2.7$ with non-distance dependent losses of $16$ dB. Thus, the average channel gain corresponding to the $s_i$'s uplink/downlink channels is given by $\beta_i=10^{-1.6}\times d_i^{-2.7}$, where $d_i$ is the distance between $s_i$ and the HAP. The noise power at the HAP's receive antennas is assumed to be $\sigma^2=-94$ dBm. The circuit power consumption and transmit power of the EH devices is set to $20\ \mu$W and $200\ \mu$W, respectively. Devices are required to transmit each message in the uplink within $t=20$ ms time window ($M_0=20$ and we limit our analysis to $M\le 20$). 
	Additionally, unless stated otherwise, we set $S=100$ to account for a massive deployment ($\sim 0.22$ devices/$\mathrm{m}^2$)\footnote{The projections towards 6G point to challenging scenarios with up to $10$  devices/$\mathrm{m}^2$ \cite{Mahmood.2019,Mahmood.2020}, thus, the considered deployment is not as massive as it can be 10 years from now. Obviously, the larger the number of EH devices is, the more beneficial the analyzed CSI-free schemes are when compared to the traditional CSI-based schemes \cite{Lopez.2019_CSI}.}, $\kappa=5$ to account for some LoS, and $k= 10^{-3}$ bits/Hz to account for low-rate transmissions as typical in MTC. Finally, $M=6$, $t_s=1.6$ s, $\xi_0=-20$ dBm and $\tilde{\xi}_0=-30$ dBm.
	\subsection{On the WET performance}
 \begin{figure}[t!]
	\centering
	\includegraphics[width=0.95\columnwidth]{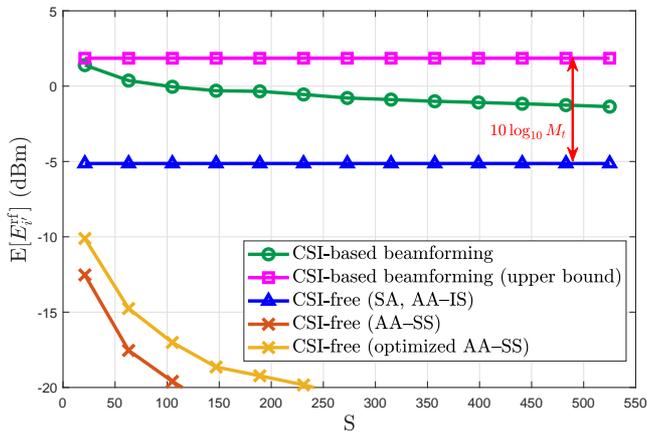}
	\caption{Worst-case average RF energy availability as a function of $\mathrm{S}$. We set $M_t=5$ for the CSI-based schemes and CSI-free ($\mathrm{AA-SS}$). The results corresponding to $\mathrm{AA-SS}$ are drawn assuming the PB is equipped with a uniform linear array with half-wavelength spaced antenna elements.}		
	\label{Fig3}
\end{figure}
	Herein, we investigate the EH performance of the farthest node $s_{i'}$ under CSI-free and CSI-based WET schemes. Such node performs the worst in the network, thus, we can guarantee a minimum level performance for the entire set of devices. Fig.~\ref{Fig3} shows the average RF energy availability as a function of the number of devices. Notice that the performance of the CSI-free SA and $\mathrm{AA-IS}$ schemes is independent of the number of devices since they provide a uniform performance along the area. This is different from the CSI-based approach for which the beamforming gains decrease as $\mathrm{S}$ increases, and the CSI-free $\mathrm{AA-SS}$ scheme, which favors certain spatial directions only and its use is recommended in clustered setups where positioning information is available \cite{Lopez.2020,LopezMahmood.2020}.
	Although both SA and $\mathrm{AA-IS}$ perform similarly in the WET phase,  SA outperforms $\mathrm{AA-IS}$ in terms of overall performance by allowing the use of more receive antennas in the WIT phase. Please refer to \cite{Lopez.2020,LopezMahmood.2020} for an extensive comparison between $\mathrm{SA}$ and other state-of-the-art CSI-free schemes in terms of harvested energy and/or RF energy availability at the EH devices, while hereinafter we just discuss the performance of SA as the CSI-free WET scheme. 
	Obviously, only when $\mathrm{S}\rightarrow\infty$, both SA and the CSI-based scheme converge to the same performance in terms of average RF energy  supply (see Remark~\ref{re4p5}).
	The best possible performance is when the farthest node is powered via a  CSI-based MRT scheme, and still such node is sufficiently far such that it keeps performing the worst in the network. Therefore, a MRT under such circumstance provides an upper bound performance, which is always $10\log_{10}M_t$ dB greater than SA's (see Remark~\ref{re4}), as also illustrated in Fig.~\ref{Fig3}. From now on, we only show the results corresponding to the MRT bound since for the exact CSI-based performance, $\mathbf{P}$ in \eqref{P} requires to be repeatedly solved, which is extremely costly as illustrated in Remark~\ref{re2}.
	While  the CSI-based scheme always outperforms the SA CSI-free scheme in terms of average RF energy  supply, that is not longer the case when analyzing the EH performance in terms of energy outage probability. On one hand, the EH diversity is smaller in case of the CSI-based scheme as shown in Table~\ref{table1}. On the other hand, the CSI-based scheme introduces additional sources of energy consumption, which is accounted in the term $\xi_\mathrm{csi}^{(d)}$ and depends specifically on $M_t$ and $\xi_0$ as stated in \eqref{csid}. 
	\begin{figure}[t!]
	\includegraphics[width=0.95\columnwidth]{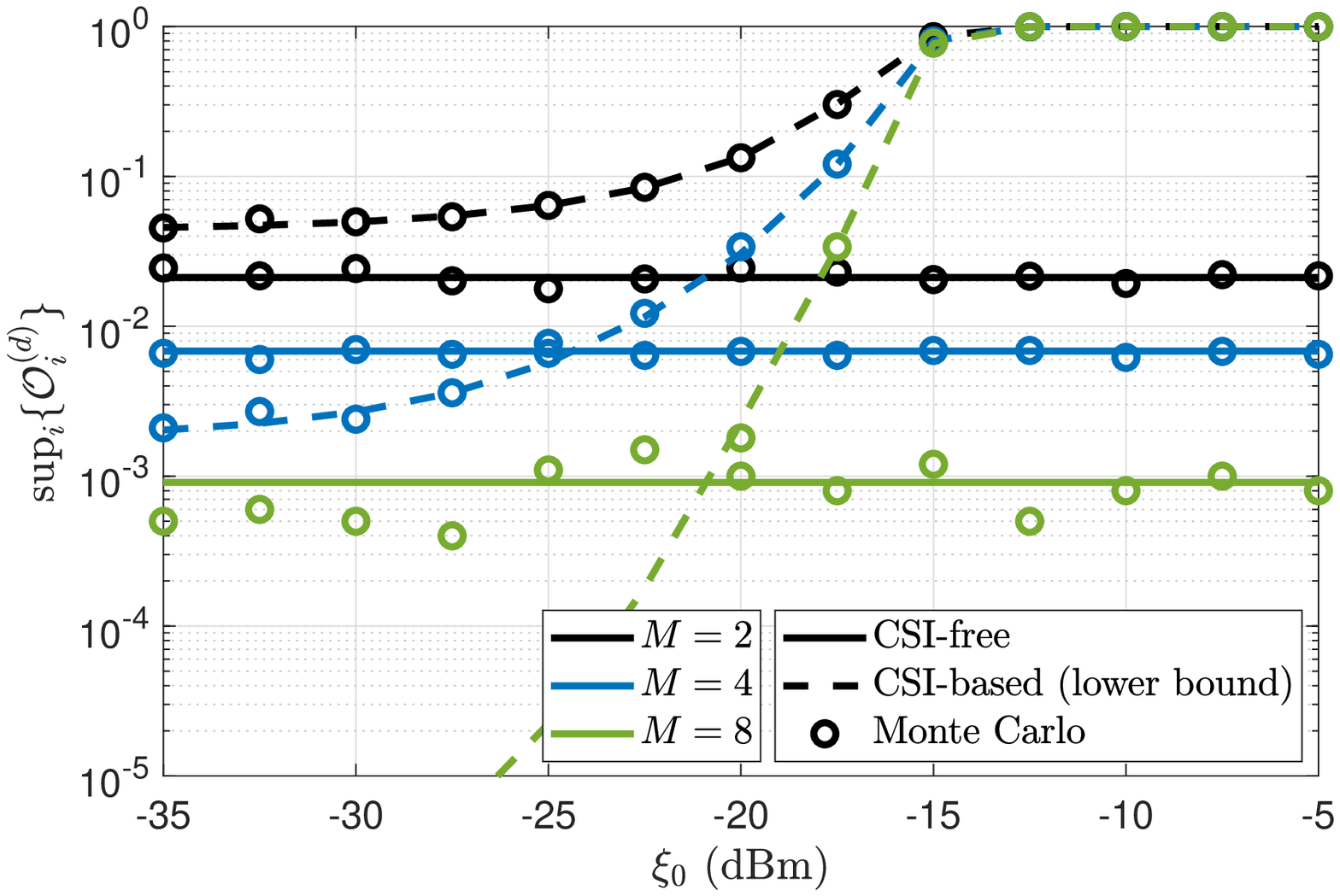}\vspace{3mm}\\
	\includegraphics[width=0.95\columnwidth]{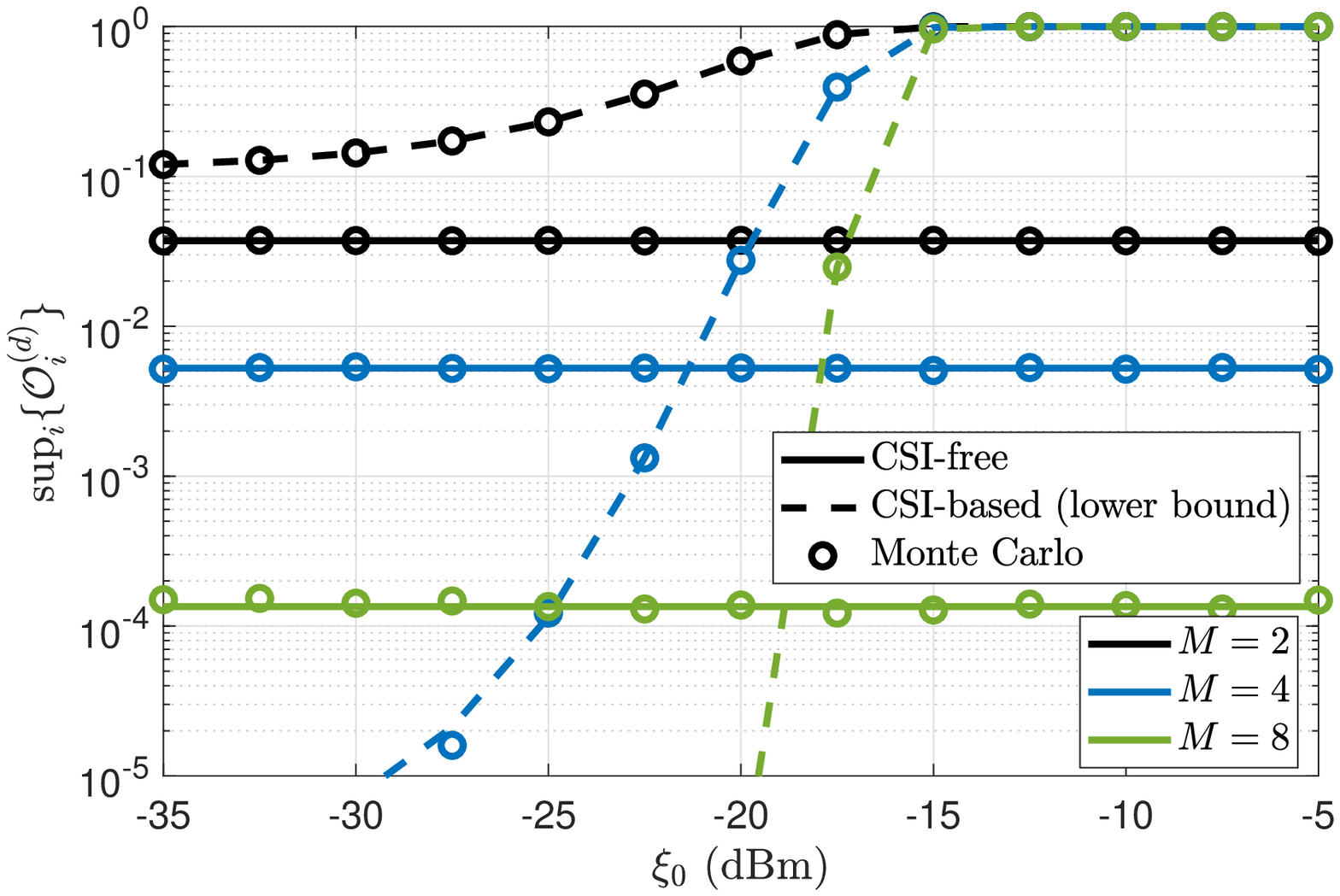}
	\caption{Worst-case energy outage probability as a function of $\xi_0$. $a)$ Poisson traffic, $\varepsilon=10^{-1}$, $p_c=20\mu$W (top); and $b)$ Periodic traffic, $p_c=50\mu$W. For the CSI-based schemes we set $M_t=M_r=M/2$ (bottom).}		
	\label{Fig4}
\end{figure}
	
	Fig.~\ref{Fig4} shows the worst-case energy outage probability as a function of $\xi_0$ for both Poisson and periodic\footnote{We used a much more restrictive circuit power consumption level in case of periodic traffic to better visualize the outage performance in a range of values that can be corroborated via Monte Carlo simulation. This was done only for Fig.~\ref{Fig4}.} traffics when half of the antennas are used in downlink/uplink under the CSI-based scheme. The performance of the CSI-free SA scheme remains obviously constant, while  the MRT gains from CSI disappear quickly as $\xi_0$ increases. However, notice that even when a greater $M_t$ increases the CSI acquisition costs, it is still more advantageous than costly for the CSI-based scheme, since a greater $M_t$ enlarges the $\xi_0$ region for which the CSI-based scheme is preferable. Finally, notice that under periodic traffic the performance is much better than when devices activate randomly according to a Poisson process.
	\subsection{On the WIT performance}\label{WITresults}
	Herein, we investigate the communication performance of the farthest node $s_{i'}$ when powered via either CSI-free or CSI-based WET schemes. We evaluate the worst-case information outage probability given a communication attempt. Specifically, Fig.~\ref{Fig5}a shows the performance degradation as the number of EH devices increases when the information decoding is done via MMSE and ZF techniques under Poisson traffic. In general, ZF is known to approach the MMSE performance at high SINR, but notice that here the MMSE scheme outperforms significantly the ZF scheme since the operation is at relatively small SINRs, because of the low-power low-rate transmissions. Therefore, operating under the MMSE decoding scheme is highly recommended, and hereinafter we only present the results related to MMSE. Note that the performance improves as $\varepsilon$ decreases, since the collision probability decreases. However, as $\varepsilon$ decreases, the changes of energy outage increase, which is not considered here, and may degrade the overall system performance as discussed in Remark~\ref{re8} and illustrated later in the next subsection. Although not shown in Fig.~\ref{Fig5}a, it is worth highlighting that the performance improves as the number of receive antennas increases. All previous trends hold  also under periodic traffic although with better relative performance as observed in Fig.~\ref{Fig5}b. This figure shows the information outage as a function of the average number of coherence time intervals between the transmission of consecutive messages from each device (the periodicity in case of periodic traffic and the inverse of traffic rate in case of Poisson traffic). As such time increases, the performance improves since the number of concurrent transmissions decreases. In case of deterministic traffic, the maximum performance is attained when $N=1$, which occurs when $t_s\ge 2$ s, as corroborated in the figure. In case of Poisson traffic, the chances of no concurrent transmissions vanish only when $L^*=\mathrm{S}$, which tends to happen as $\varepsilon$ decreases and/or $T_c/\lambda$ increases as shown in the figure. 
		\begin{figure}[t!]
		\includegraphics[width=0.95\columnwidth]{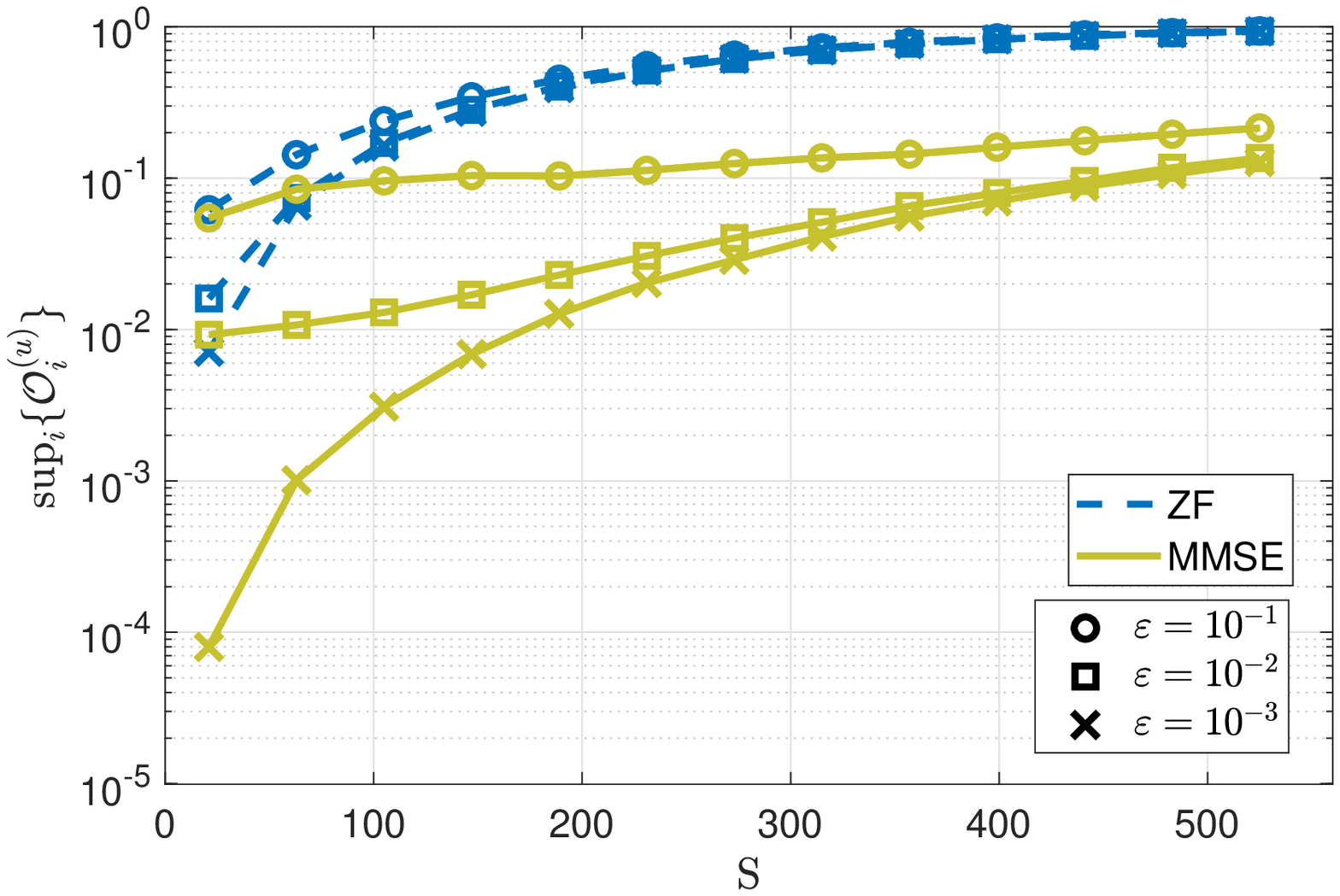}\vspace{3mm}\\
		\includegraphics[width=0.965\columnwidth]{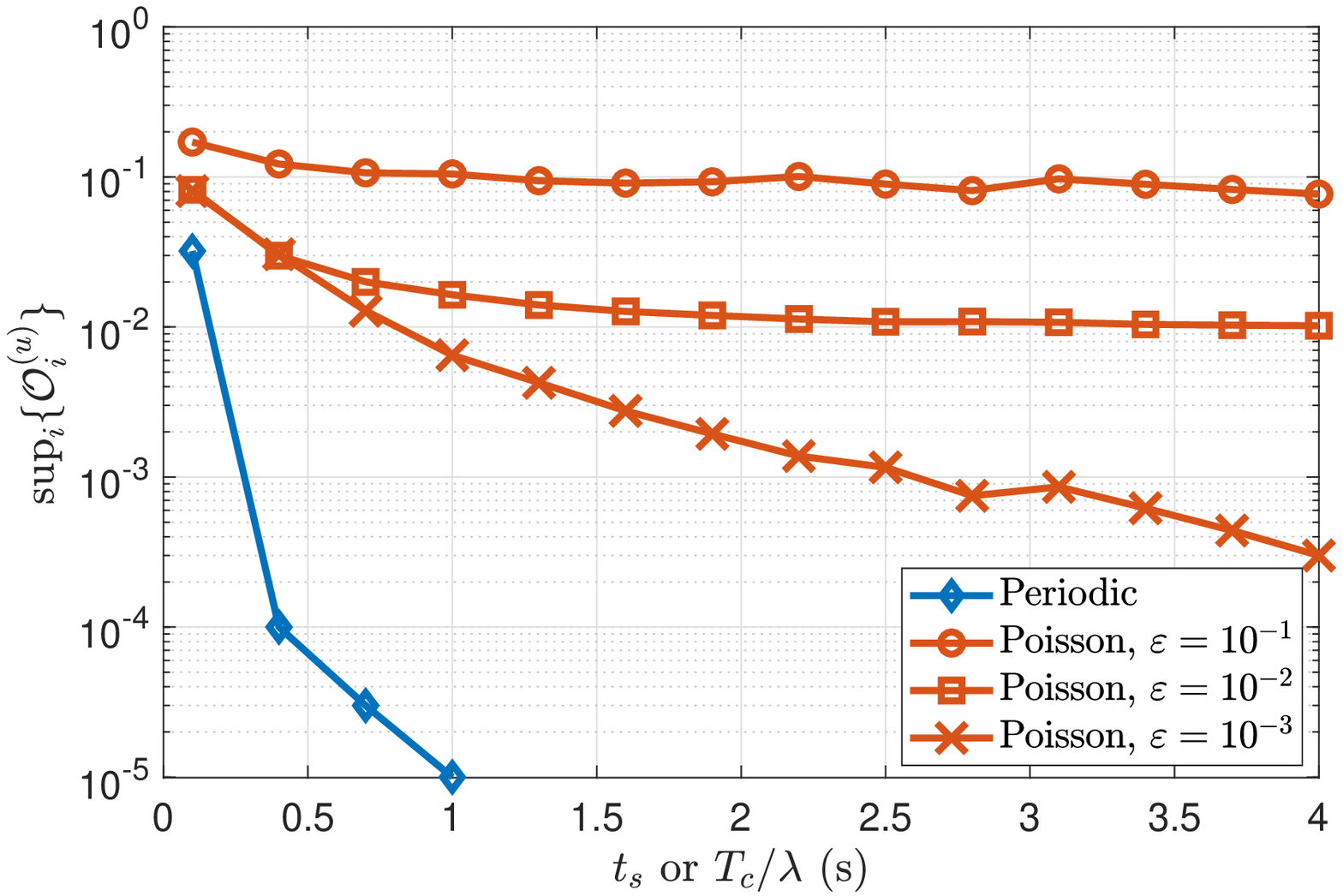}
		\caption{Worst-case information outage probability $a)$ Performance as a function of the number of devices in a Poisson traffic scenario (top). $b)$ Performance of MMSE as a function of the average number of coherence time intervals between the transmission of consecutive messages from each device (bottom). In both cases, we set $M_r=2$.}			
		\label{Fig5}
	\end{figure}
	\subsection{On the general performance}\label{generalP}
    \begin{figure}[t!]
    	\centering
    	\includegraphics[width=0.95\columnwidth]{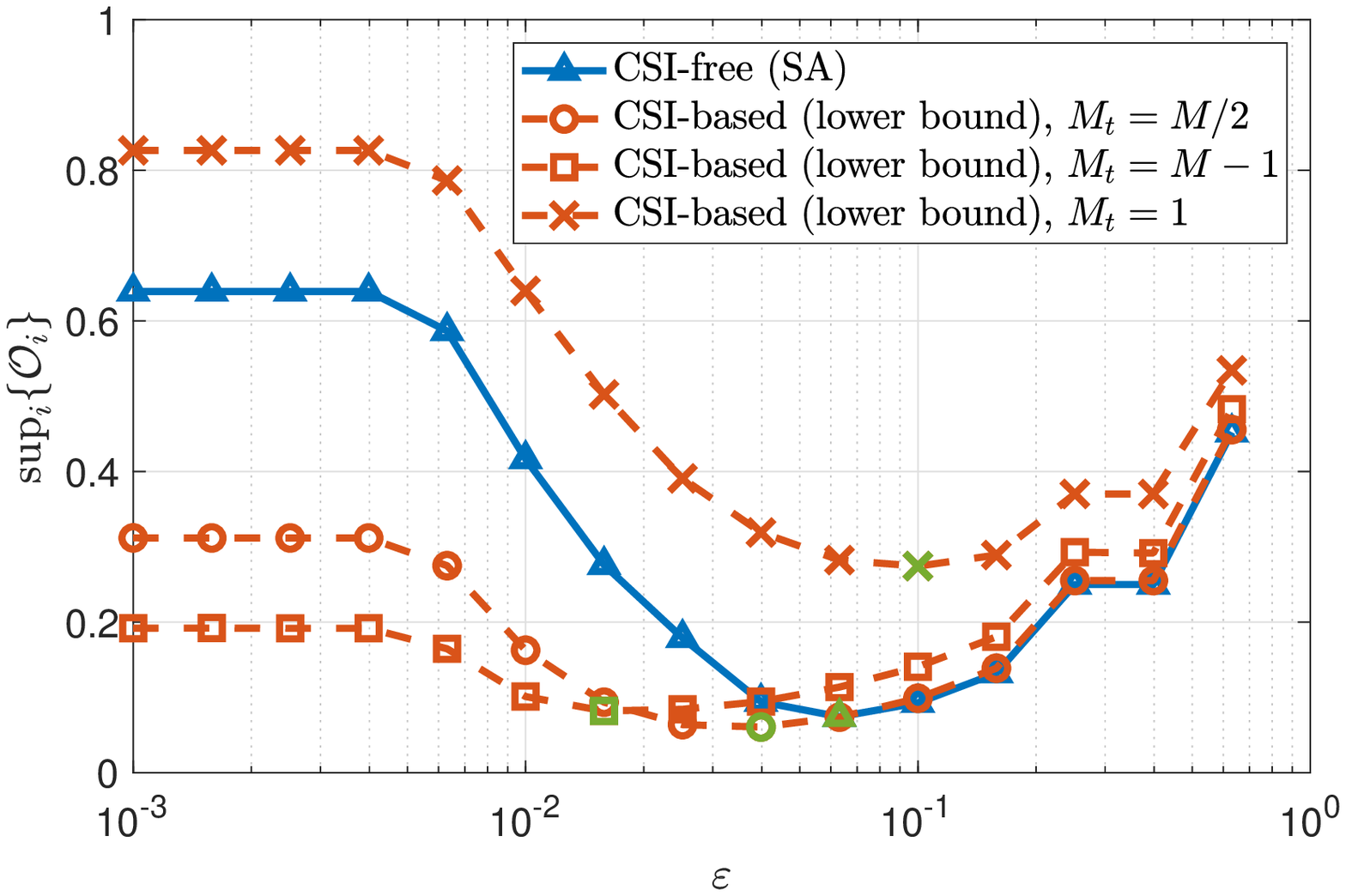}\vspace{3mm}\\
    	\includegraphics[width=0.95\columnwidth]{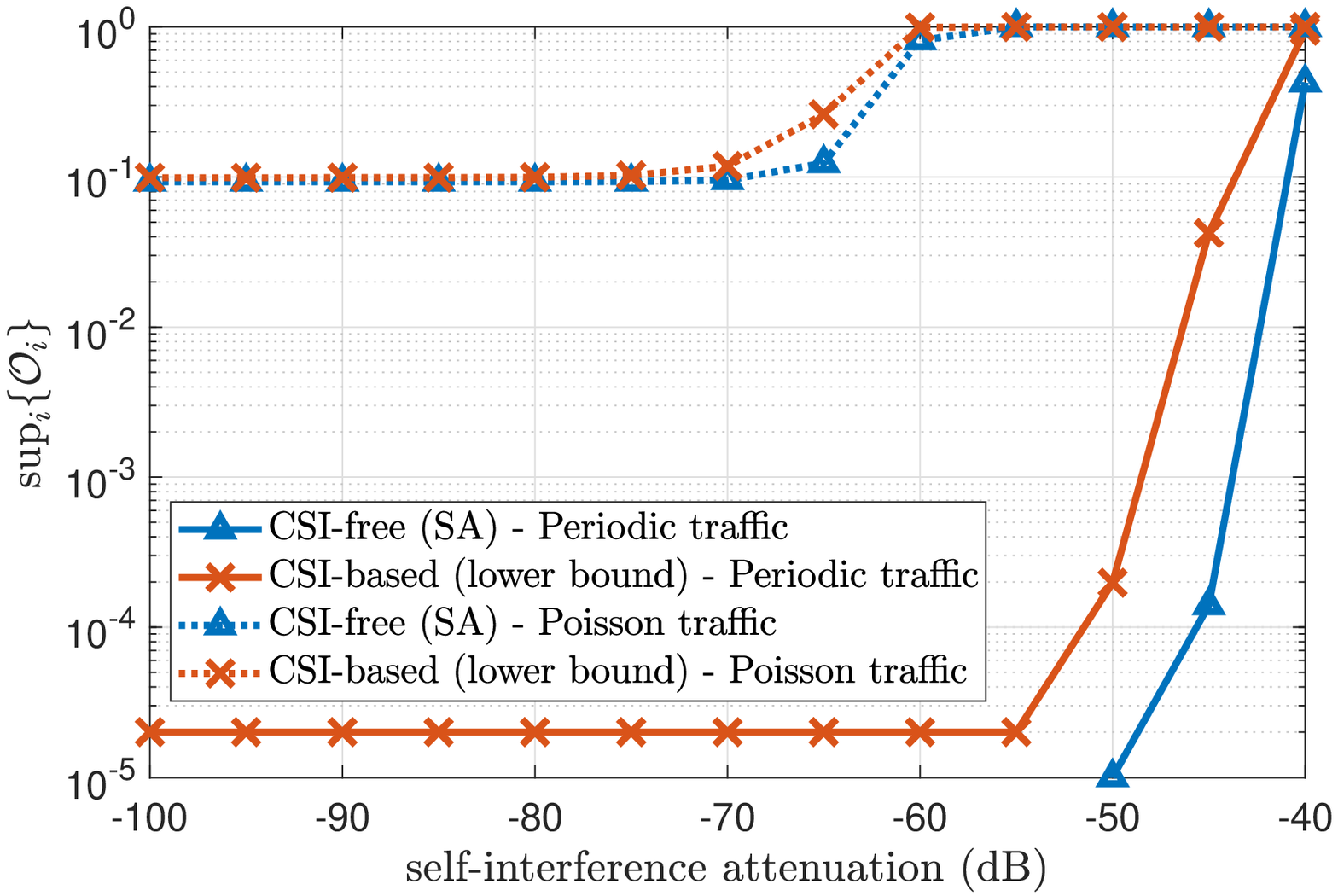}
    	\caption{Worst-case outage probability  $a)$ in a Poisson traffic scenario as a function of the target collision  probability (top) and $b)$ in both Poisson (with $\varepsilon=0.1$) and periodic traffic scenarios as a function of the self-interference attenuation in case of imperfect SIC and $M_t=M/2$ (bottom).}
    	\label{Fig6}
    \end{figure}
	Herein we investigate the overall outage performance by taking into consideration both the energy and information outage performances. As a first result and for a Poisson traffic scenario, we show in Fig.~\ref{Fig6}a the overall worst-case outage probability   as a function of the target collision  probability. In case of the CSI-based WET scheme, we utilize different antenna partitions and found out that the system benefits more by having no less transmit antennas than receive antennas at the HAP since the WET phase is the most critical. Besides, the main insight is that for each configuration there is an optimum target {collision  probability} as discussed in Remark~\ref{re8}, which at the end influences significantly the overall system performance. Such optimum values are highlighted in green in the figure. On the other hand,  we illustrate the impact of an imperfect SIC in the overall system performance in case of both periodic and Poisson traffic profiles in Fig.~\ref{Fig6}b. In case of Poisson traffic, we set $\varepsilon=0.1$ and consequently adopted $M_t=M/2$ which was shown to be a suitable choice in Fig.~\ref{Fig6}a. For simplicity, the self-interfering near-field channels are modeled as constant, and their associated path-loss along with the effective attenuation after SIC are considered in the self-interference attenuation parameter in the $x-$axis of Fig.~\ref{Fig6}b. Note that the system performance is degraded only if SIC performs extremely poorly. However, state-of-the-art SIC techniques can already attenuate the self-interfering signals to noise floor in practice \cite{Alves.2020}, which means that SIC operation is not critical in the considered setup and in the remaining results we keep our assumption of perfect SIC. Still, notice that the performance under Poisson traffic is sensitive to imperfect SIC because there are usually more concurrent transmissions for which the noise power plus self-interference is more harmful. Finally, observe that the system under periodic traffic behaves better, while the CSI-free SA scheme attains the best performance in all the cases shown with imperfect SIC. The latter means that the CSI acquisition costs overcame the beamforming gains.

	\begin{figure}[t!]
		\includegraphics[width=0.95\columnwidth]{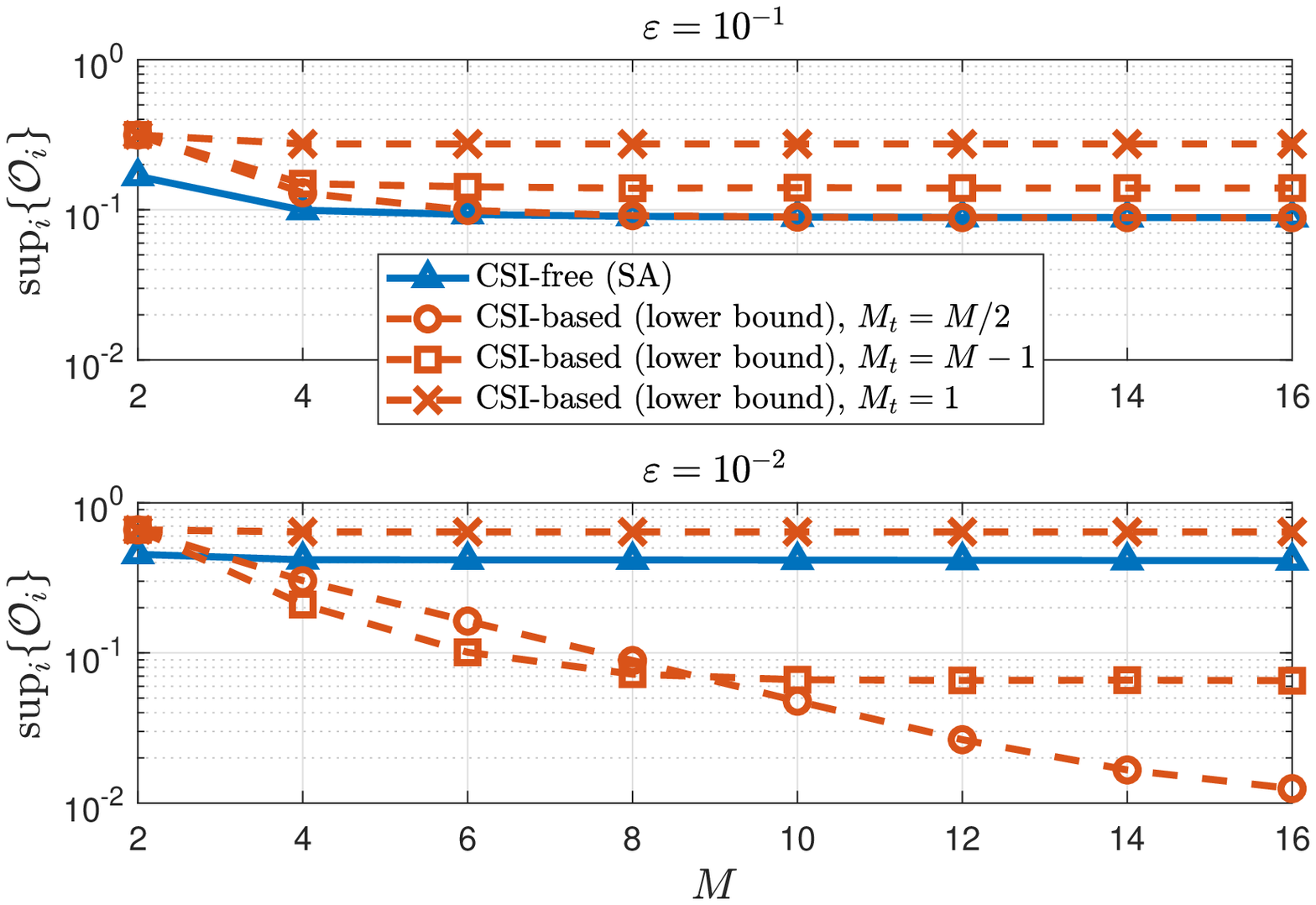}\vspace{3mm}\\
		\includegraphics[width=0.95\columnwidth]{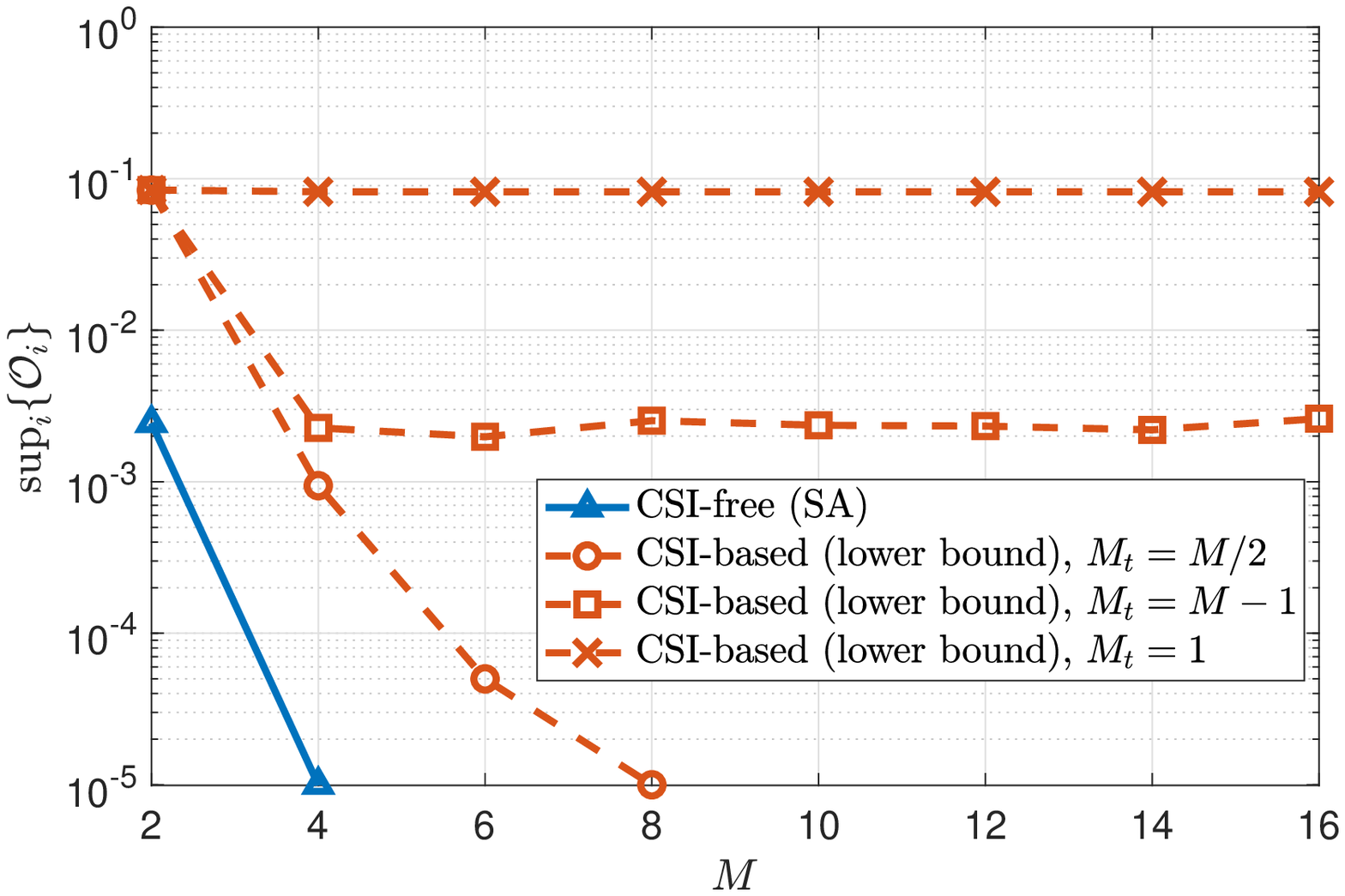}
		\caption{Worst-case outage probability as a function of the number of antennas. $a)$ Poisson traffic, $\varepsilon\in \{10^{-1}, 10^{-2}\}$ (top); and $b)$ Periodic traffic (bottom).}	
		\label{Fig7}
	\end{figure}
	Fig.~\ref{Fig7} shows the performance for both Poisson (Fig.~\ref{Fig7}a), with $\varepsilon\in\{10^{-1},10^{-2}\}$, and periodic (Fig.~\ref{Fig7}b) traffic as a function of the total number of antennas.  In case of Poisson traffic, the performance highly depends on $\varepsilon$ as commented in the previous paragraph, and corroborated now in Fig.~\ref{Fig7}a. One can see that the optimum $\varepsilon$ is around $10^{-1}$ for $M\lesssim 4$, while a more stringent value should be adopted for $M\gtrsim 4$. Also, observe that for relatively small $M$, the CSI-free SA scheme is preferable, while as $M$ increases, the CSI-based alternative becomes more suitable, but the set of antennas must be properly partitioned for transmitting and receiving. In case of periodic traffic, the performance improvements as a function of $M$ are even more noticeable. It is shown that, while having most of the antennas dedicated to transmission is acceptable for small $M$, there is need of a more equitable distribution of the transmit and receive antennas as $M$ increases. Anyway, the CSI-free SA scheme outperforms all the CSI-based configurations in the examples illustrated in Fig.~\ref{Fig7}b.
	
	\begin{figure}[t!]
		\includegraphics[width=0.97\columnwidth]{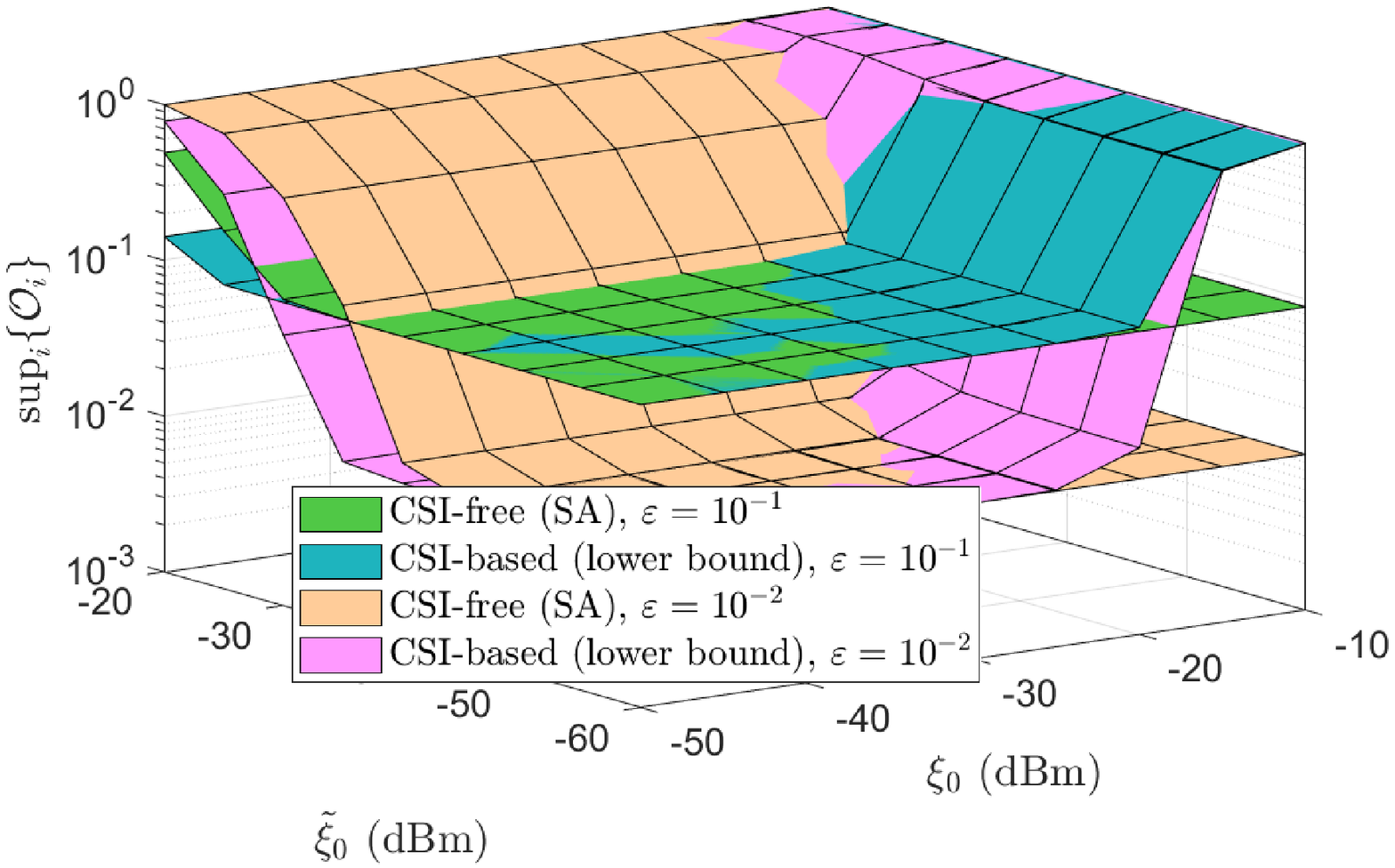}\vspace{3mm}\\
		\includegraphics[width=0.95\columnwidth]{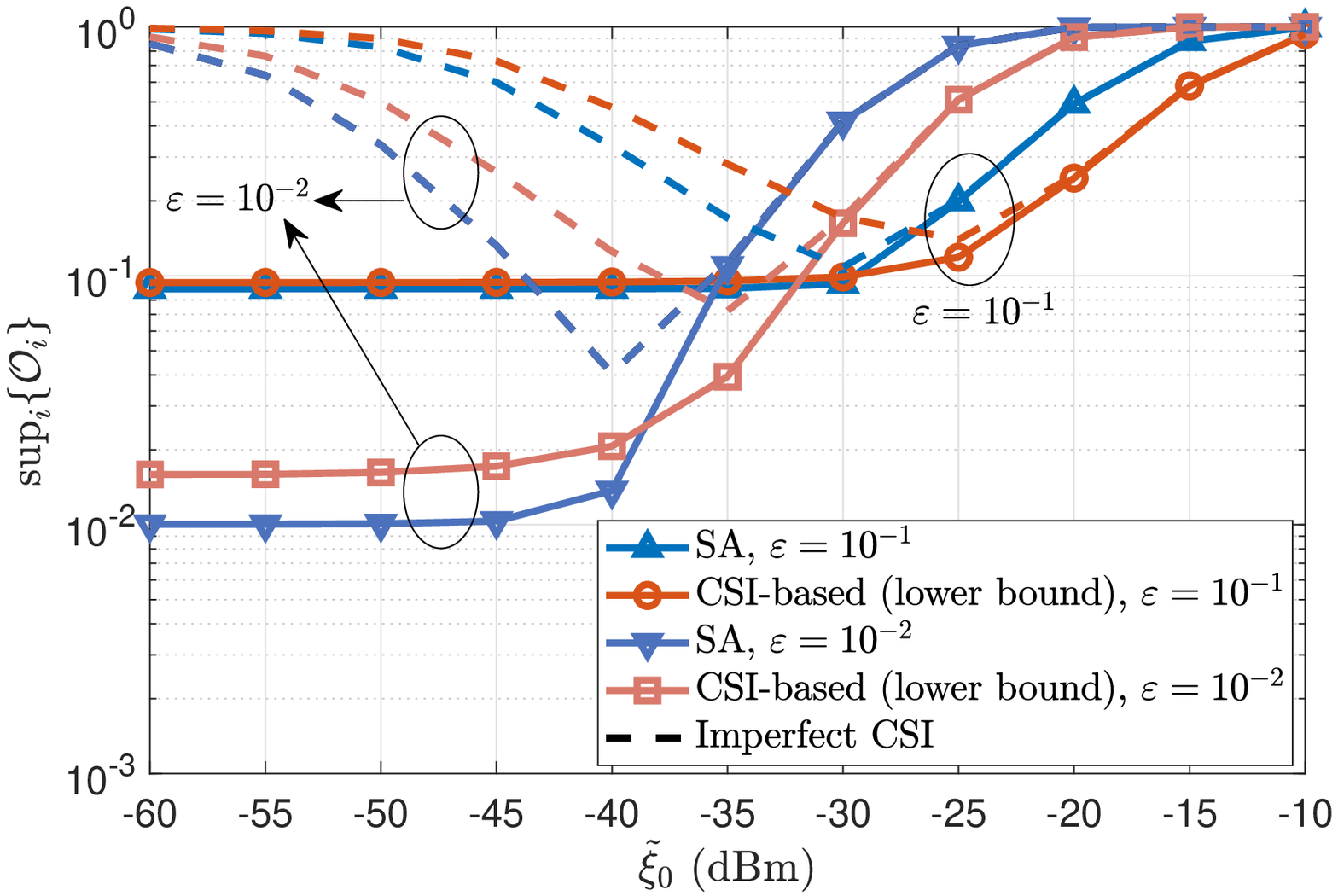}
		\caption{Worst-case outage probability under Poisson traffic with $\varepsilon\in\{10^{-1},10^{-2}\}$. $a)$ Performance as a function of $\xi_0$ and $\tilde{\xi}_0$ under perfect CSI acquisition (top). $b)$ Performance comparison of perfect and imperfect CSI acquisition for $\xi_0\!=\!-20$ dBm and $t_p=70\ \mu$s (bottom).  We set $M_t=M_r=M/2$ for the CSI-based schemes.}		
		\label{Fig8}
	\end{figure}
	The overall outage probability as a function of both downlink and uplink CSI nominal acquisition costs, $\xi_0$ and $\tilde{\xi}_0$, respectively, is shown in Fig.~\ref{Fig8}a. We focus on the Poisson traffic scenario with $\varepsilon\in\{10^{-1},10^{-2}\}$, for which the performance under SA and its CSI-based WET counterpart is more balanced\footnote{Although not shown here, SA outperforms more easily its CSI-based counterpart under periodic traffic, similar to results and discussions associated to previous figures.}. As observed, SA keeps a constant performance along the $x-$axis since the HAP does not require/use any CSI for powering the devices, while the performance under the CSI-based scheme is seriously affected as $\xi_0$ increases above $-20$ dBm. Meanwhile, the overall performance decreases as $\tilde{\xi}_0$ takes significant values since CSI is required for information decoding under both CSI-based and CSI-free WET mechanisms. As observed, the CSI-free WET scheme becomes attractive as $\tilde{\xi}_0$ decreases. Note that throughout this work we have assumed perfect CSI for information decoding. However, a performance degradation due to imperfect CSI acquisition is unavoidable in practice. In Fig.~\ref{Fig8}b, we show some preliminary hints on the expected performance under imperfect uplink CSI. The results are drawn after decomposing the uplink channel vector as $\mathbf{h}_i^{(u)}=\mathbf{\hat{h}}_i^{(u)}+\mathbf{\tilde{h}}_i^{(u)}$, where
		\begin{figure}[t!]
			\includegraphics[width=0.95\columnwidth]{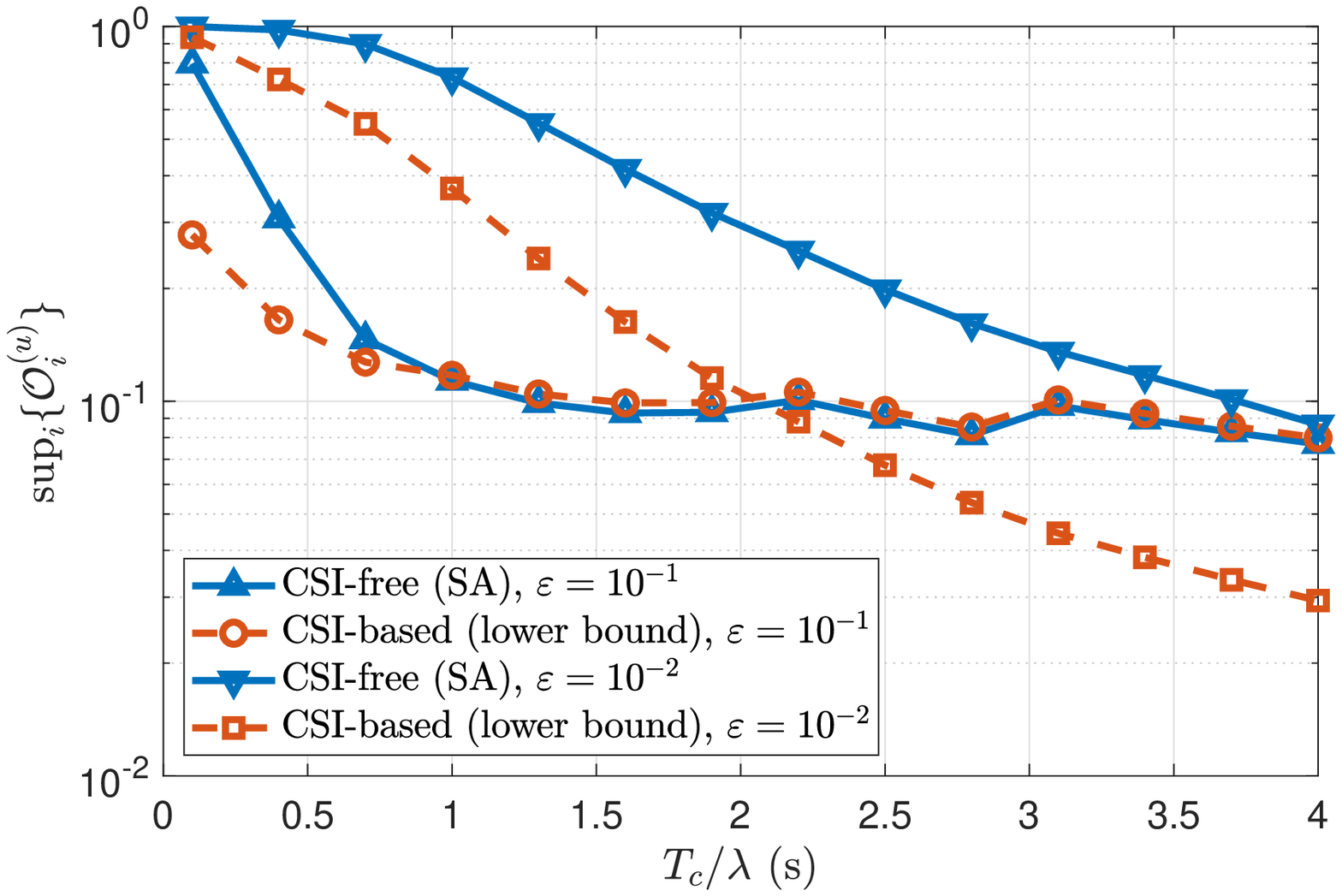}\vspace{3mm}\\ 
			\includegraphics[width=0.95\columnwidth]{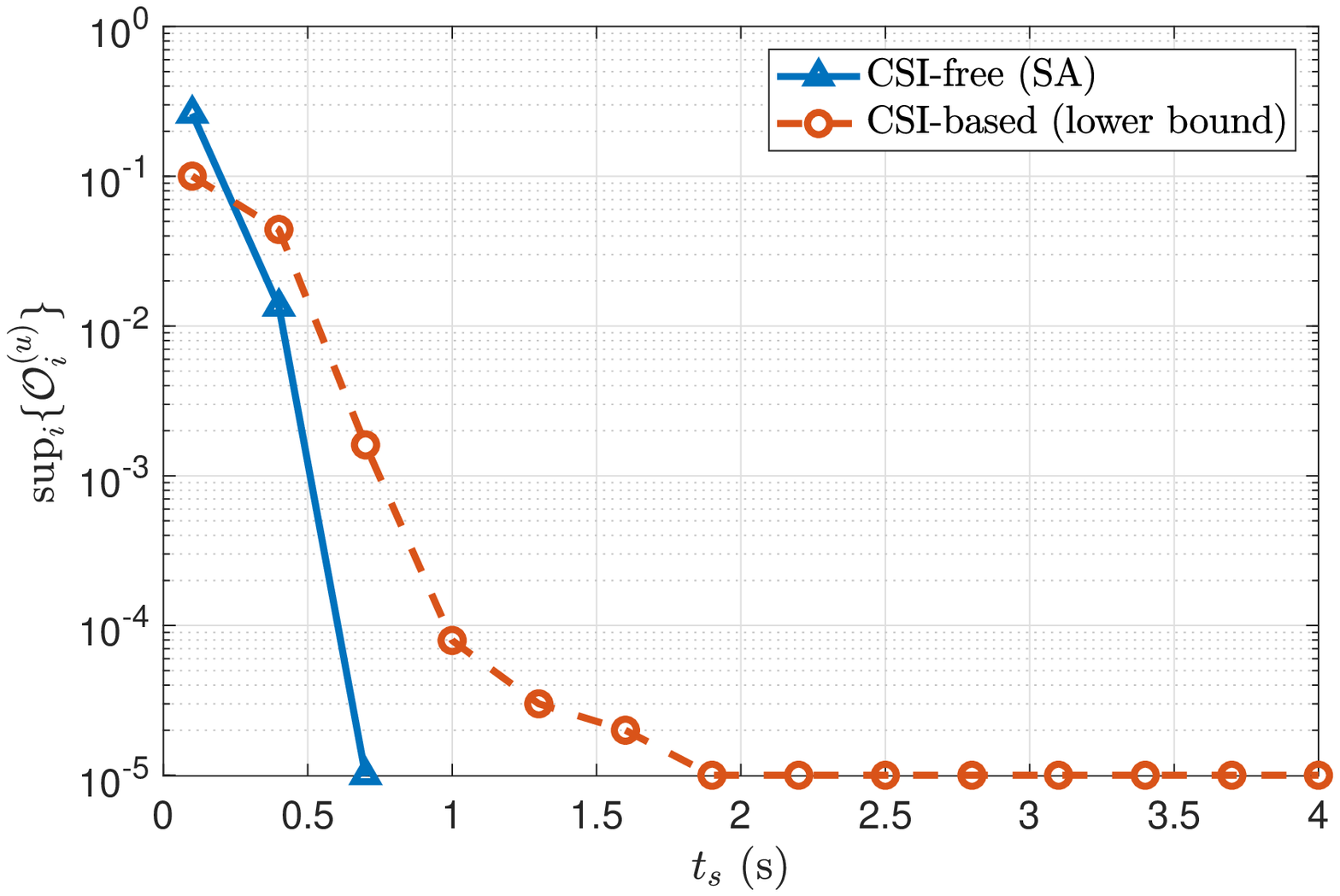}
			\caption{Worst-case outage probability as a function of $a)$ $T_c/\lambda$ for Poisson traffic (top),  $b)$ $t_s$ for periodic traffic (bottom). We set $M_t=M_r=M/2$ for the CSI-based schemes.}
			\label{Fig9}
		\end{figure}
		\begin{align}
		\mathbf{\hat{h}}_i^{(u)}\!&\sim \mathcal{CN}\bigg(\sqrt{\frac{\kappa}{1+\kappa}}\mathbf{1}_{M_t\times 1},\frac{1}{1+\kappa}\times\frac{\xi_{\mathrm{csi}}^{(u)}/t_p}{\xi_{\mathrm{csi}}^{(u)}/t_p\!+\!\sigma^2}\mathbf{I}_{M_t\times M_t}\bigg),\nonumber\\
		\mathbf{\tilde{h}}_i^{(u)}\!&\sim\mathcal{CN}\bigg(\mathbf{0},\frac{1}{1+\kappa}\times\frac{\sigma^2}{\xi_{\mathrm{csi}}^{(u)}/t_p+\sigma^2}\mathbf{I}_{M_t\times M_t}\bigg) \nonumber
		\end{align}
 are the channel estimate and its corresponding estimation error, respectively, and $t_p$ is the pilot symbol time \cite{LopezFernandez.2018}. The system performance degrades for both relatively small and large values of $\tilde{\xi}_0$, which affect the information and energy outage probabilities, respectively. This evidences a fundamental trade-off in setting $\tilde{\xi}_0$, that needs to be considered in practice, and which we plan to deepen in future works.

	Fig.~\ref{Fig9} shows the overall outage probability as a function of the average number of coherence time intervals between the transmission of consecutive messages from each device. As such average inter-arrival time increases, the chances of outage decrease. In case of Poisson traffic (Fig.~\ref{Fig9}a), the performance is strictly determined by the chosen target collision  probability (see Remark~\ref{re8}), whose optimum value tends to decrease as the average inter-arrival time increases. In case of periodic traffic (Fig.~\ref{Fig9}b), the overall performance improves, but bounded, with the inter-arrival time. The minimum outage probability is already reached for $t_s=2$ s, and this is due to the same arguments we exposed earlier when discussing Fig.~\ref{Fig5}b results. 
	\section{Conclusion}\label{conclusions}
	In this paper, we assessed for the first time the suitability of CSI-based and CSI-free multi-antenna WET schemes in a WPCN with a massive number of devices and under periodic or Poisson traffic sources. The system performance was evaluated, and optimized whenever possible, in terms of the worst (farthest) node's performance for the sake of fairness, and considering a realistic power consumption model at the devices. In case of the CSI-based WET scheme, we cast the optimization problem as an SDP problem, hence, a global solution is perfectly available by using regular optimization solvers. Additionally, we showed that a MRT beamformer is close to the optimum whenever the farthest node experiences at least 3 dB of power attenuation more than the remaining devices. As a CSI-free scheme, we adopted the novel $\mathrm{SA}$ strategy introduced and analyzed in \cite{Lopez.2019_CSI,Lopez.2020}. This scheme, although not capable of providing higher average harvesting gains compared to the CSI-based schemes, it does provide greater WET/WIT  diversity gain with lower energy requirements. Our numerical results evidenced that the CSI-free scheme performs specially favorably under periodic traffic conditions, while its performance may degrade significantly if the setup is not optimally configured in case of Poisson traffic. In fact, the system performance not only deteriorates under Poisson random access when compared to deterministic traffic, but  optimally configuring the network becomes also more challenging.
	In that regard, we cast an optimization problem to determine the optimum pilot reuse factor such that the collision probability under Poisson accesses remains below a certain limit. Numerical results demonstrated the existence of an optimum target collision  probability. 
	Finally, we showed the considerable gains from using a MMSE equalizer instead of a ZF equalizer in the analyzed WPCN scenario.
	\appendices
	\section{Proof of Theorem~\ref{the2}}\label{App_A}
	We depart from \eqref{O1} and use \eqref{E0} and \eqref{Ei''} to write
	\begin{align}
	\sup_i\{\mathcal{O}_i^{(d)}\}&\ge \mathbb{P}\big[E_{i'}<E_0\big]\nonumber\\
	&=\mathbb{P}\Big[\eta T_c\!\sum_V\! E_{i'}^\mathrm{rf}\!<\!v\xi_\mathrm{csi}^{(d)}\!+\!\xi_\mathrm{csi}^{(u)}\!+\!p_c vT_c\!+\!p t\Big]\nonumber\\
	&\stackrel{(a)}{=}\mathbb{E}_V\bigg[\mathbb{P}\Big[\frac{\eta T_c P\beta_{i'}}{2(1+\kappa)}\sum_V \mathfrak{X}<v\xi_\mathrm{csi}^{(d)}+\xi_\mathrm{csi}^{(u)}\nonumber\\
	&\qquad\qquad\qquad\qquad\qquad+p_c vT_c+p t\Big|V\Big]\bigg]\nonumber\\
	&\stackrel{(b)}{=}1-\mathbb{E}_V\bigg[\mathcal{Q}_{M_t v}\Big(\sqrt{2M_t\kappa v},\sqrt{\frac{2(1+\kappa)}{\eta T_c P\beta_{i'}}}\times\nonumber\\
	&\qquad\qquad\ \ \  \sqrt{v\big(\xi_\mathrm{csi}^{(d)}+p_cT_c\big)\!+\!\xi_\mathrm{csi}^{(u)}\!+\!p t}\Big)\bigg],
	\end{align}
	where $(a)$ comes from averaging the outage events conditioned on a given $v$ and  using \eqref{Ei'}, while $(b)$ follows by taking the summation of $v$ non-central chi-squared RVs, which obeys a non-central chi-squared distribution as well, but with $v$ times the number of degrees of freedom and non-centrality parameter, and using its CDF by taking advantage of \eqref{cdf}.
	Then, \eqref{cfO} is attained  after taking the expectation with respect to $V$ by using \eqref{Vpdf}. We avoided using an infinite notation here, and instead considered only the first $v_{\max}$ summands, hence \eqref{cfO} is, in general, an approximation that becomes exact as $v_{\max}\rightarrow\infty$. However,
	since $V$ is a discrete exponential-like random variable characterized in \eqref{Vpdf}, setting $v_{\max}$ such that $v_{\max}\ge 10\times \mathbb{E}[V]$ is enough for a good accuracy. Notice that
	\begin{align}
	\mathbb{E}[V]=\frac{e^\lambda}{e^\lambda-1},\label{Ev}
	\end{align}
	which follows from realizing that computing $\mathbb{E}[V]=\sum_{v=1}^{\infty}ve^{-\lambda v}$ is equivalent to evaluate  $e^{\lambda}$ into the $Z-$transform of the sequence $1,2,3,\cdots$, which is $\frac{z^{-1}}{(1-z^{-1})^2}$. \hfill 	\qedsymbol
	\section{Proof of Theorem~\ref{the3}}\label{App_B}
    Let us focus on the performance in terms of average incident RF power in a certain device $s_i\in\mathcal{S}\backslash s_{i'}$. By using \eqref{Ei} and the second MRT beamfomer given in \eqref{wj}, we have that
	\begin{align}
	\mathbb{E}\big[E_i^\mathrm{rf}\big]&=\mathbb{E}\bigg[P\beta_i\sum_{j=1}^{M_t}\big|(\mathbf{h}_i^{(d)})^T\mathbf{w}_j\big|^2\bigg]\nonumber\\
	&=P\beta_i\mathbb{E}\Big[\big|(\mathbf{h}_i^{(d)})^T\mathbf{w}_1\big|^2\Big]\nonumber\\
	&=P\beta_i\mathbb{E}\bigg[\frac{\big|\mathbf{h}_{i'}^{(d)H}\mathbf{h}_i^{(d)}\big|^2}{||\mathbf{h}_{i'}^{(d)}||^2}\bigg].\label{exp}
	\end{align}
	Unfortunately, $\mathbf{w}_1$ follows a cumbersome projected normal distribution when $\Im\{\mathbf{w}_1\}=\mathbf{0}$ \cite{Hernandez.2017}, which makes the analysis of the distribution of $\big|(\mathbf{h}_i^{(d)})^T\mathbf{w}_1\big|^2$ already very complicated even for such a simplified scenario. Meanwhile, decoupling the expression as shown in the last line of \eqref{exp} does not solve the problem since numerator and denominator are correlated. 
	We resorted to simulation and standard fitting procedures, and found out that
	\begin{align}
	\frac{\mathbb{E}\big[E_i^\mathrm{rf}\big]}{P\beta_i}&\approx\frac{1}{4}\Big(\frac{\kappa}{1+\kappa/\sqrt{2}}\Big)^2M_t+\frac{1}{1+\kappa/2}\label{app}
	\end{align}
	matches \eqref{exp} accurately, which is corroborated in Fig.~\ref{Fig11}. Now, based on \eqref{Ei'} we have that 
	\begin{figure}[t!]
		\centering 
		\includegraphics[width=0.95\columnwidth]{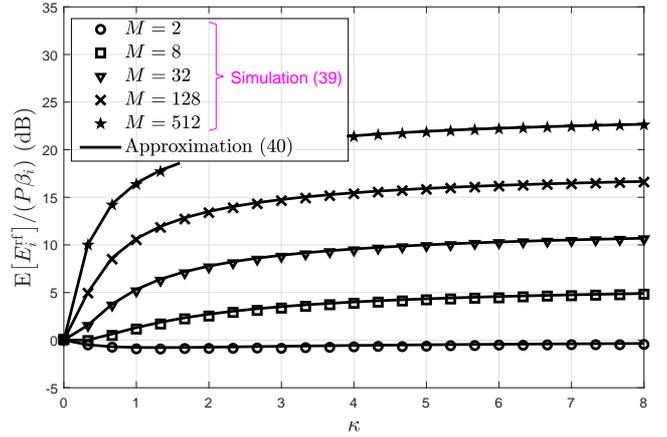}
		\caption{$\mathbb{E}\big[E_i^\mathrm{rf}\big]/(P\beta_i)$ vs $\kappa$ for $M_t\in\{2,8,32,128,512\}$. Comparison between the Monte Carlo-based \eqref{exp} and the analytical approximation \eqref{app}.} 
		\label{Fig11}
	\end{figure}
	\begin{align}
	\mathbb{E}\big[E_{i'}^\mathrm{rf}\big]=P\beta_{i'}M_t,
	\end{align}
	and define $\Omega=\inf_{s_i\in\mathcal{S}\backslash s_{i'}}\big\{\mathbb{E}\big[ E_i^\mathrm{rf}\big]\big\}\big/\mathbb{E}\big[E_{i'}^\mathrm{rf}\big]$, which matches \eqref{Ec}. By using such definition, it is evident that as $\Omega$ grows, the MRT beamformer affects less the non-intended receivers and consequently becomes more frequently the optimum. 
	In fact, already when $\Omega> 1$, we have that even when the HAP uses only the CSI statistics referred to $s_{i'}$, the remaining devices harvest more energy at least half of the time. The reason is that the median of the distribution of $E_i^\mathrm{rf}$ is smaller than $E_{i'}^\mathrm{rf}$'s for the same average performance, i.e., $\Omega=1$, since the distribution of $E_i^\mathrm{rf}$ is intrinsically more positively skewed than $E_{i'}^\mathrm{rf}$'s. Therefore, under such circumstances, the MRT beamformer is at least half of the time the optimum from a system perspective.   \hfill 	\qedsymbol

\section{Proof of Theorem~\ref{the4}}\label{App_C}
Let us denote as $\hat{\mathcal{S}}\subseteq\tilde{\mathcal{S}}$, where $N'=|\hat{\mathcal{S}}|$, the set of devices using the same pilot signal, then
\begin{align}
\mathbb{P}[s_i\in\hat{\mathcal{S}}]=\frac{1}{L}\mathbb{P}[s_i\in\tilde{\mathcal{S}}]=\frac{t}{LT_c}\big(1-e^{-\lambda}\big),\label{ps2}
\end{align}
and similar to $N$, $N'$ is a Binomial RV with parameters $\mathrm{S}$ and $\frac{t}{LT_c}\big(1-e^{-\lambda}\big)$. Consequently,  $\mathbb{E}[N']\!=\frac{\mathrm{S}t}{LT_c}\big(1-e^{-\lambda}\big)$ represents the average number of concurrent transmissions of devices using the same pilot signals.
Here we focus our attention to the performance of $s_{i'}$. 
Assuming such a device is already active, its associated collision probability is then given by 
\begin{align}
\mathcal{O}_\mathrm{col}&=1-\mathbb{P}[N'=1|N'>0]=1-\frac{\mathbb{P}[N'=1]}{1-\mathbb{P}[N'=0]},
\end{align}
which matches \eqref{ocol}. \hfill	\qedsymbol
\section{Algorithm for solving problem in \eqref{problem}}\label{App_D}
	Let us take 
	\begin{align}
	u=1-\frac{t}{LT_c}(1-e^{-\lambda}),\label{a0}
	\end{align}
	then, by substituting \eqref{a0} into \eqref{ocol}, the problem in \eqref{problem} can be easily addressed after solving for $u$ the following inequality
	\begin{align}
\mathcal{O}_\mathrm{col}\le \varepsilon\ \ \ \rightarrow\ \ \ 
1-\frac{\mathrm{S}(1-u)u^{\mathrm{S}-1}}{1-u^\mathrm{S}}&\le \varepsilon\nonumber\\
\frac{(1-u)u^{\mathrm{S}-1}}{1-u^\mathrm{S}}&\ge \frac{1-\varepsilon}{\mathrm{S}}.\label{a1}
\end{align}
	Notice that the left term is an increasing function of $u$ in the interval $(0,1)$, which is the interval of interest. In fact, for $u\rightarrow \{0,\ 1\}$ the left term converges to $\{0,\ 1\}$, respectively, and since $\frac{1-\varepsilon}{\mathrm{S}}\in (0,1)$, a unique solution is guaranteed.
	Now, by relaxing the inequality to an equality and making $g(u)=\frac{1-u}{1-u^\mathrm{S}}$, we reformulate \eqref{a1} as
\begin{align}
u^{\mathrm{S}-1}&=\frac{1-\varepsilon}{\mathrm{S}g(u)}\nonumber\\
u&=\Big(\frac{1-\varepsilon}{\mathrm{S}}\Big)^{\frac{1}{\mathrm{S}-1}}g(u)^{-\frac{1}{\mathrm{S}-1}}.\label{gu}
\end{align}
	\begin{algorithm}[t!]
	\SetKwInOut{Input}{Input}
	\SetKwInOut{Output}{Output}
	\Input{$\mathrm{S}$, $\varepsilon$, $t$, $T_c$, $\lambda$ and tolerance $u_\epsilon>0$}
	\Output{$L_0$, $\mathrm{iter}$}
	\textbf{Initializing:}\label{lin0} $u^{(0)}=\big(\frac{\mathrm{S}}{1-\varepsilon}-1\big)^{-\frac{1}{\mathrm{S}-1}}$, $\mathrm{iter}=1$, $\Delta u=\infty$\\
	\While{$\Delta u>u_\epsilon$}
	{
		$u^{(\mathrm{iter})}:=\tilde{g}\big(u^{(\mathrm{iter}-1)}\big)$ as given in \eqref{gu2}\;	
		$\Delta u=\big|u^{(\mathrm{iter})}-u^{(\mathrm{iter}-1)}\big|$\;
		$\mathrm{iter}:=\mathrm{iter}+1$\;
	}
	{
		$L_0=\bigg\lceil \frac{t\big(1-e^{-\lambda}\big)}{\big(1-u^{(\mathrm{iter})}\big)T_c} \bigg\rceil$.\label{linF}
	}   
	\caption{Finding $L_0$ \eqref{problem}}
\end{algorithm}
	Thus, we can say that the unique solution of \eqref{gu}, $u^*$, is a fixed point of
	\begin{align}
	\tilde{g}(u)=\Big(\frac{1-\varepsilon}{\mathrm{S}}\Big)^{\frac{1}{\mathrm{S}-1}}g(u)^{-\frac{1}{\mathrm{S}-1}}.\label{gu2}
	\end{align}
	Now, notice that
	\begin{align}
	|\tilde{g}(u)'|&=\frac{1}{\mathrm{S}-1}\Big(\frac{1-\varepsilon}{\mathrm{S}}\Big)^{\frac{1}{\mathrm{S}-1}}g(u)^{-1-\frac{1}{\mathrm{S}-1}}|g'(u)|\nonumber\\
	&=\!\frac{1}{\mathrm{S}\!-\!1}\!\Big(\!\frac{1\!-\!\varepsilon}{\mathrm{S}}\!\Big)^{\frac{1}{\mathrm{S}\!-\!1}}\!\!g(u)^{-\!1\!-\!\frac{1}{\mathrm{S}\!-\!1}}\Bigg|\frac{\mathrm{S}(1\!-\!u)u^{\mathrm{S}\!-\!1}\!\!-\!u(1\!-\!u^\mathrm{S})}{(1-u^\mathrm{S})^2}\Bigg|\nonumber\\
	&=\!\frac{1}{\mathrm{S}\!-\!1}\!\Big(\frac{1\!-\!\varepsilon}{\mathrm{S}}\Big)^{\frac{1}{\mathrm{S}\!-\!1}}\!\!g(u)^{1\!-\!\frac{1}{\mathrm{S}\!-\!1}}\!\bigg|\frac{\mathrm{S}u^{\mathrm{S}\!-\!1}}{1-u}\!-\!\frac{u(1\!-\!u^{\mathrm{S}})}{(1-u)^2}\bigg|,\label{a3}
	\end{align}
	which reaches the maximum for $\varepsilon=0$. Fig.~\ref{Fig12}a shows \eqref{a3} for such extreme configuration, and since $|\tilde{g}(u)'| < 1$, we can assure that that still holds for any $\varepsilon$. Then, based on the Fixed Point Theory \cite{Agarwal.2001} the convergence to the solution is guaranteed by using a fixed point iterative procedure as the one presented in Algorithm~1. Notice that  one can choose any $u^{(0)}\in(0,1)$ as initial value, however, we chose the value shown in line~\ref{lin0} as it already constitutes a good guess towards the final value $u^*$, which helps to reduce the required number of iterations. Such an initial value comes from realizing that $\frac{1-u^\mathrm{S}}{1-u}=\sum_{n=0}^{\mathrm{S}-1}u^n>1+u^{\mathrm{S}-1}$ (using the geometric series) and substituting such result into \eqref{a1} to attain
	%
\begin{align}
\frac{u^{\mathrm{S}-1}}{1+u^{\mathrm{S}-1}}=\frac{1-\varepsilon}{\mathrm{S}}\ \ \ \rightarrow\ \ \
u=\Big(\frac{\mathrm{S}}{1-\varepsilon}-1\Big)^{-\frac{1}{\mathrm{S}-1}}.
\end{align}
	For Algorithm~1 to run, we require to specify a tolerance error $u_\epsilon$ that we are willing to accept, and it constitutes the stopping criterion. The smaller  $u_\epsilon$ is, more iterations are required as corroborated in Fig.~\ref{Fig12}b, where it can also be observed that less than 16 iterations are enough in all the cases. Another interesting fact is that the convergence is even faster as $\mathrm{S}$ increases. After convergence, $L_0$ is computed according to line~\ref{linF}, which comes from isolating $L$ in \eqref{a0}. \hfill 	\qedsymbol
	
	\begin{figure}[t!]
		\includegraphics[width=0.95\columnwidth]{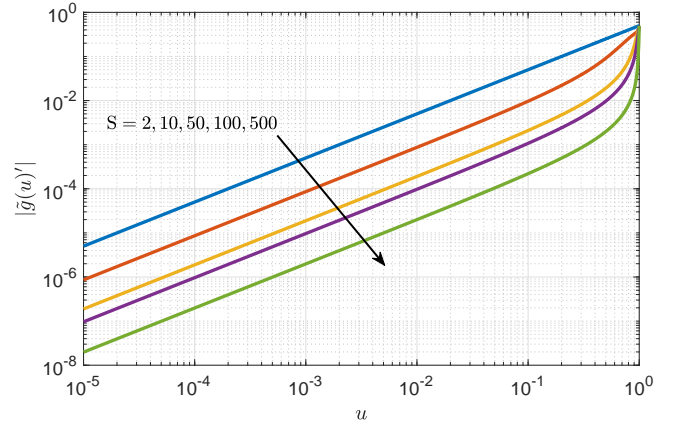}\vspace{4mm}\\		
		\vspace{0mm}\ \  \includegraphics[width=0.93\columnwidth]{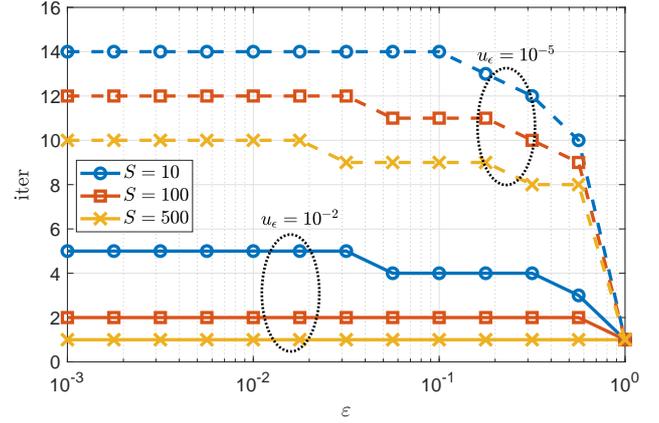}
		\caption{$a)$ $|\tilde{g}(u)'|$ vs $u$ for $\varepsilon=0$ (top). $b)$ Required number of iterations as a function of $\varepsilon$ for Algorithm~1 to converge. We use $\mathrm{S\in\{10,100,500\}}$ and $u_\epsilon\in\{10^{-2},10^{-5}\}$ (bottom).}
		\label{Fig12}
	\end{figure}

	\bibliographystyle{IEEEtran}
	\bibliography{IEEEabrv,references}
\end{document}